\documentclass[aoas,preprint,authoryear,reqno]{imsart}

\usepackage{amsthm,amsmath,amssymb}
\usepackage[round]{natbib}
\RequirePackage[colorlinks,citecolor=blue,urlcolor=blue]{hyperref}
\usepackage{graphicx,xspace}
\usepackage{xcolor}
\usepackage{subfig}
\usepackage{xcolor}
\usepackage{comment}
\usepackage{xr}
\externaldocument{supp}
\usepackage{pdfpages}
\arxiv{1707.09587}

\startlocaldefs

\def\E{\mathbb{E}}
\def\I{\mathbb{I}}
\def\P{\mathbb{P}}

\def\sX{\mathcal{X}}

\def\1{\boldsymbol{1}}

\numberwithin{equation}{section}
\theoremstyle{plain}
\newtheorem{theorem}{Theorem}[section]

\newtheorem{assumption}[theorem]{Assumption}
\newtheorem{lemma}[theorem]{Lemma}

\newcommand{\our}{LP-OPT\xspace}

\theoremstyle{remark}
\newtheorem{remark}[theorem]{Remark}

\endlocaldefs

\allowdisplaybreaks

\begin{document}

\begin{frontmatter}

\title{Network modelling of topological domains using Hi-C data}
\runtitle{Network modelling of topological domains using Hi-C data}


\begin{aug}
\author{\fnms{Y.X. Rachel} \snm{Wang}\thanksref{m1}\ead[label=e1]{rachel.wang@sydney.edu.au}}
\and
\author{\fnms{Purnamrita} \snm{Sarkar}\thanksref{m2}\ead[label=e2]{purna.sarkar@austin.utexas.edu}}
\and
\author{\fnms{Oana} \snm{Ursu}\thanksref{m3}\ead[label=e3]{oursu@broadinstitute.org}}
\and 
\author{\fnms{Anshul} \snm{Kundaje}\thanksref{m3, m4}\ead[label=e4]{akundaje@stanford.edu}}
\and
\author{\fnms{Peter J.} \snm{Bickel}\thanksref{m5}\ead[label=e5]{bickel@stat.berkeley.edu}}

\runauthor{Wang et al.}

\affiliation{School of Mathematics and Statistics, University of Sydney, Australia\thanksmark{m1} \\ 
Department of Statistics and Data Sciences, University of Texas, Austin\thanksmark{m2}\\
Department of Genetics, Stanford University\thanksmark{m3} \\ 
Department of Computer Science, Stanford University\thanksmark{m4} \\ 
Department of Statistics, University of California, Berkeley\thanksmark{m5} \\
}

\address{Y.X.R. Wang\\
School of Mathematics and Statistics F07\\
University of Sydney \\
NSW 2006, Australia\\
\printead{e1}
}

\address{P. Sarkar\\
2317 Speedway STOP D9800\\
Austin, Texas 78712\\
\printead{e2}
}

\address{O. Ursu, A. Kundaje\\
300 Pasteur Dr.\\
Lane L301\\
Stanford, CA 94305\\
\printead{e3,e4}
}

\address{P.J. Bickel\\
Department of Statistics\\
University of California, Berkeley\\
367 Evans Hall\\
Berkeley, California, 94720\\
\printead{e5}
}

\end{aug}

\begin{abstract}
Chromosome conformation capture experiments such as Hi-C are used to map the three-dimensional spatial organization of genomes. One specific feature of the 3D organization is known as topologically associating domains (TADs), which are densely interacting, contiguous chromatin regions playing important roles in regulating gene expression. A few algorithms have been proposed to detect TADs. In particular, the structure of Hi-C data naturally inspires application of community detection methods. However, one of the drawbacks of community detection is that most methods take exchangeability of the nodes in the network for granted; whereas the nodes in this case, i.e. the positions on the chromosomes, are not exchangeable. We propose a network model for detecting TADs using Hi-C data that takes into account this non-exchangeability. In addition, our model explicitly makes use of cell-type specific CTCF binding sites as biological covariates and can be used to identify conserved TADs across multiple cell types. The model leads to a likelihood objective that can be efficiently optimized via relaxation. We also prove that when suitably initialized, this model finds the underlying TAD structure with high probability. Using simulated data, we show the advantages of our method and the caveats of popular community detection methods, such as spectral clustering, in this application. Applying our method to real Hi-C data, we demonstrate the domains identified have desirable epigenetic features and compare them across different cell types. 
\end{abstract}



\end{frontmatter}

\section{Introduction}
In complex organisms, the genomes are very long polymers divided up into chromosomes and tightly packaged to fit in a minuscule cell nucleus. As a result, the packaging and the three-dimensional (3D) conformation of the chromatin have a fundamental impact on essential cellular processes including cell replication and differentiation. In particular, the 3D structure regulates the transcription of genes at multiple levels \citep{dekker2008}. At the chromosome level, open (active) and closed (inactive) compartments alternate along chromosomes \citep{lieberman2009} to form regions with clusters of active genes and repressed transcriptional activities, the latter typically partitioned to the nuclear periphery \citep{sexton2012, smith2016}. At a smaller scale, chromatin loops make long-range regulations possible by bringing distant enhancers and repressors close to their target promoters. 

Recently, one specific feature of chromatin organization known as topologically associating domains (TADs) has attracted much research attention. TADs are contiguous regions of chromatin with high levels of self-interaction and have been found in different cell types and species \citep{dixon2012, sexton2012, hou2012}. A number of studies have shown TADs contain clusters of genes that are co-regulated \citep{nora2012} and may correlate with domains of histone modifications \citep{le2014}, suggesting TADs act as functional units to help gene regulation. Disruptions of domain conformation have been associated with various diseases including cancer and limb malformation \citep{lupianez2015, meaburn2009}.

While it is not possible to completely observe the 3D conformation, in the past decade several chromosome conformation capture technologies have been developed to measure the number of ligation events between spatially close chromatin regions. Hi-C is one of such technologies and provides genome-wide measurements of chromatin interactions using paired-end sequencing \citep{lieberman2009}. The output can be summarized in a raw contact frequency matrix $M$, where $M_{ij}$ is the total number of read pairs (which are interacting) falling into bins $i$ and $j$ on the genome. These equal-sized bins partition the genome and range from a few kilobases to megabases depending on the data resolution. Since TADs are regions with high levels of self-interactions, they appear as dense squares on the diagonal of the matrix. 

A number of algorithms have been proposed to detect TADs, most of which rely on maximizing the intra-domain contact strength. This includes the earlier methods by \citet{dixon2012} and \citet{sauria2014}, which summarize the 2D matrix as a 1D statistic to capture the changes in interaction strength at domain boundaries; and methods that directly utilize the 2D structure of the matrix to contrast the TAD squares from the background \citep{filippova2014, levy2014, weinreb2016, malik2015, rao2014}. All of these methods use an optimization framework and apply standard dynamic programming to obtain the solution. The algorithms typically involve a number of tuning parameters with the number of TADs chosen in heuristic ways. More recently, \citet{cabreros2016} proposed to view the contact frequency matrix as an weighted undirected adjacency matrix for a network and applied community detection algorithms to fit mixed-membership block models. 

Statistical networks provide a natural framework for modelling the 3D structure of chromatin as we can consider it as a spatial interaction network with positions on the genome as nodes. Network models have gained much popularity in numerous fields including social science, genomics, and imaging; the availability of Hi-C data opens new ground for applying network techniques, such as community detection, in order to answer important questions in biology. One of the drawbacks of community detection is that most of the methods take exchangeability of the nodes in the network for granted. However, modelling Hi-C data is a typical situation where the nodes, i.e. the positions on the genome, are not exchangeable. In particular, since TADs are contiguous regions, treating TADs as densely connected communities imposes a geometric constraint on the community structure. 

In this paper, we propose a network model for detecting TADs that incorporates the linear order of the nodes and preserves the contiguity of the communities found. Our main contributions include: i) It has been observed empirically TADs are conserved across different cell types, but explicit joint analysis remains incomplete. Our likelihood-based method easily generalizes to allow for \textit{joint inference} with multiple cell types. ii) It has been postulated that CTCF (an insulator protein) acts as anchors at TAD boundaries \citep{nora2012, sanborn2015}. Empirically, TAD boundaries correlate with CTCF sites, and modifications of binding motifs can lead to TAD disappearance \citep{sanborn2015}. Our model is flexible enough to include the positions of CTCF sites as \textit{biological covariates}. iii) We account for the existence of nested TADs. iv) The core of our algorithm is based on linear programming, making it fast and efficient. v) In addition, we provide theoretical justifications by analyzing the asymptotic performance of the algorithm and using automated model selection for choosing the number of TADs. The latter saves the need for many tuning parameters. Among these, i) and ii) are unique features of our method with biological significance. 

The rest of the paper is organized as follows. We introduce the model and the estimation algorithm with asymptotic analysis in Section \ref{sec_methods}. In addition, we describe a post-processing step for testing the enrichment of contact within any TAD found. In Section \ref{sec_results},  we first use simulated data to demonstrate the necessity of taking into account the linear ordering of the nodes and compare our method with other TAD detection algorithms.  We next present the results of real data analysis for multiple human cell types, individually and jointly, using a publicly available Hi-C dataset \citep{rao2014}. We end the paper with a discussion of the advantages of our method and aspects for future work.

\section{Methods}
\label{sec_methods}
In this section, we describe a hierarchical network model for detecting nested TADs in a Hi-C contact frequency matrix using cell-line specific CTCF peaks as covariates. At each level of the hierarchy, we show the parameters can be estimated efficiently via coordinate ascent and provide asymptotic analysis of the algorithm. In addition, the model and algorithm can be adapted to identify TADs conserved across multiple cell lines. As further confirmation that the TADs found by the algorithm indeed correspond to regions of the genome with enriched interactions, we post process the candidate regions by performing a nonparametric test. 

\subsection{Model description}
\label{subsec_model}

We consider a hierarchical model with a set of maximally non-overlapping TADs at each level. In this section, we focus on describing the model for the base (outermost) level. The model and parameter estimation for the nested levels are identical and will be mentioned at the end of Section~\ref{subsec_est}. 

Let $M$ denote a $n\times n$ contact frequency matrix. $M$ is first thresholded at the $q$-th quantile to produce a binary adjacency matrix $A$. Thresholding has been a common practice in network modeling to handle weighted matrices, despite the information loss it incurs.  At canonical sequencing depth, the signal to noise ratio in Hi-C data is typically high and the resolution is relatively low. Thresholding can improve the signal to noise ratio. We examine the effect and sensitivity of the choice of $q$ in Section \ref{sec_results}. 

As mentioned in the introduction, experimental evidence suggests TAD boundaries tend to coincide with CTCF binding. This motivates us to incorporate the presence of CTCF into our model. Let $Y\in \{0,1\}^n$ be a binary vector with ones at positions where CTCF binding occurs. We will treat $Y$ as an available covariate, which can be obtained from ChIP-seq data which is cell-type specific. 

Let $X$ denote a $n\times n$ binary matrix  such that $X_{ab}=1$ if i) $Y_a=1,Y_b=1$ and ii) there is a TAD between position $a$ and $b$. $X_{ab}$ is always 0 when $Y_aY_b=0$. This enforces the model to generate TADs which always have CTCF peaks at their boundaries. Thus $X\in\{0,1\}^{n\times n}$ denotes a binary latent matrix which encodes the positions of all TADs. Also note that, it is possible to have $Y_aY_b=1$, but $X_{ab}=0$, i.e. there was no TAD formed between two CTCF binding sites.

We denote by the parameter vector $\Theta = (\beta, \{\alpha_{ab}:X_{ab}=1\})$ the probabilities of edges between nodes. If $a \leq i < j \leq b$ for $X_{ab}=1$, then $P(A_{ij} = 1) = \alpha_{ab}$. (Note that we allow for a different edge probability for each TAD.) Otherwise $P(A_{ij}=1) = \beta$, which is also referred to as the background probability. The diagonal of $A$ is set to 0. For simplicity we have assumed the connectivity within each TAD and the background is uniform, although the TADs may contain nested sub-TADs and can be heterogeneous. In general the contact frequency decreases as a function of the distance between two loci. For now one can think of the homogeneity assumption as approximating the actual distribution with a piecewise constant function, and we make use of the original weights in the post-processing step (Section~\ref{sec:post}).
 Finally, our model does not require the exact number of TADs, but only an upper bound on it. We will make this more concrete in Section~\ref{subsec_est}.

\begin{figure}
\begin{tabular}{c}
	\includegraphics[width=.5\textwidth]{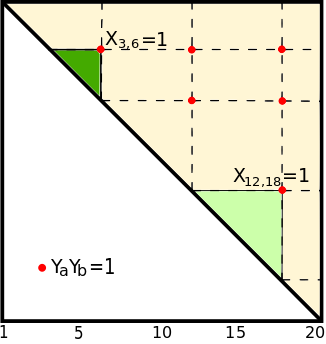}
\end{tabular}
	\caption{\label{fig:model}Example of a probability matrix configuration}
\end{figure}

\begin{remark}
	We demonstrate our model using a concrete example. The corresponding edge probability matrix is shown in Figure~\ref{fig:model}. In this example $Y_i=1$ for $i\in\{3,6,12,18\}$. We show positions where $Y_aY_b=1$ by red dots at the intersection of the grid lines, where the grid lines show the positions of the CTCF sites. Only $X_{ab}$ at these positions are allowed to be one, since according to our model, TADs can only form between two CTCF sites. In this example, there are two TADs between $3$ and $6$ and between $12$ and $18$, that is $X_{3,6}=1$, $X_{12,18}=1$. The edge probabilities may differ between the two TADs. The model naturally enforces contiguous clusters, and one cannot have a TAD with a hole inside.
\end{remark}

\subsection{Parameter estimation}
\label{subsec_est}
Knowing $(X, Y, \Theta)$, the maximization of the log likelihood for $A$ can be written as 
\begin{align}
& \max_{X\in \{0,1\}^{n\times n},\Theta}\log p(A; X, Y, \Theta)	\notag\\
= & \frac{1}{2}\sum_{i\neq j} \sum_{a<b}Y_aY_bX_{ab} \I_{i,j\in [a,b]} \left(A_{ij}\log \frac{\alpha_{ab}}{1-\alpha_{ab}}+\log(1-\alpha_{ab})\right)	\notag\\
&+\frac{1}{2}\sum_{i\neq j} \left(1-\sum_{a<b}Y_aY_bX_{ab} \I_{i,j\in [a,b]}\right)  \left(A_{ij}\log \frac{\beta}{1-\beta}+\log(1-\beta)\right)	\notag\\
\text{s.t. } & \sum_{a<b} Y_aY_bX_{ab} \leq K	\notag\\
  \text{and } & \sum_{c\le a \le d} Y_cY_dX_{cd} \le 1 \,\text{for all } a \text{ s.t. } Y_a=1. 
  \label{eq_obj_llh}
\end{align}


The first constraint upper bounds the total number of TADs at this level, while the second constraint ensures there is at most one TAD covering each position, thus making the TADs non-overlapping. The likelihood implies it suffices to consider $X_{ab}$ at positions such that both $Y_a=1$ and $Y_b=1$, and $X$ is effectively a $m\times m$ matrix, where $m=\sum_{a}Y_a$. In this way the covariate vector $Y$ helps reduce the search to a smaller grid. 


We maximize the likelihood by considering a relaxed objective function and performing coordinate ascent. First note that taking the derivative of $\log p(A; X, Y, \Theta)$ with respect to $\alpha_{ab}$, the estimate of $\alpha_{ab}$ does not depend on the other parameters and is given by
\begin{align}
\hat{\alpha}_{ab}&=\frac{\sum_{i,j\in[a,b]}A_{ij}}{(b-a+1)(b-a)}. 
\end{align}
Therefore it remains to maximize the likelihood with respect to $\beta$ and $X$. Since direct maximization of \eqref{eq_obj_llh} over $X$ subject to the constraints involve combinatorial optimization, we propose the following relaxed optimization,
\begin{align}
& \max_{\beta, \pi \in[0,1]^{n\times n}}L(A, Y, \beta, \pi) 	\notag\\
:= & \max_{\beta, \pi \in[0,1]^{n\times n}} \frac{1}{2}\sum_{i\neq j} \sum_{a<b}Y_aY_b\pi_{ab} 1_{i,j\in [a,b]} \left[A_{ij} \log \frac{\hat{\alpha}_{ab}}{(1-\hat{\alpha}_{ab})}+\log(1-\hat{\alpha}_{ab})\right]	\notag\\
&+\frac{1}{2}\sum_{i\neq j} \left(1-\sum_{a<b}Y_aY_b\pi_{ab} 1_{i,j\in [a,b]}\right) \left[A_{ij}\log \frac{\beta}{1-\beta}+\log(1-\beta)\right], 	\notag\\
& \qquad \text{s.t. } \sum_{a<b} Y_aY_b \pi_{ab} \leq K	\notag\\
 & \qquad \text{and } \sum_{c\le a \le d} Y_cY_d\pi_{cd} \le 1 \,\text{for all } a \text{ s.t. } Y_a=1. 
\label{eq_relaxed_llh}
\tag{LP-OPT}
\end{align}
The objective and constraints have the same form as \eqref{eq_obj_llh} but with $\pi\in[0,1]^{n\times n}$ replacing $X\in\{0,1\}^{n\times n}$. Again since $\pi_{ab}=0$ if $Y_aY_b=0$, the size of $\pi$ to be estimated is effectively $m\times m$. This relaxed version can be solved via alternating maximization, also denoted by LP-OPT. 
\begin{enumerate}
\item For each fixed $\beta$, \eqref{eq_relaxed_llh} is linear in $\pi$ and can be maximized efficiently using linear programing. 
\item For each fixed $\pi$, the objective is maximized at 
\begin{equation}
\hat{\beta} =\frac{\sum_{i,j}A_{ij}-\sum_{a,b}\pi_{ab}Y_aY_b \sum_{i,j\in[a,b]}A_{ij}}{n(n-1)-\sum_{a,b}\pi_{ab}Y_aY_b(b-a+1)(b-a)}.
\end{equation}
\end{enumerate}
The above two steps are iterated until convergence in $\beta$. 

So far we have described the model and parameter estimation for the outermost level of TADs. Within each of these TADs, we can repeat the same algorithm to detect the secondary (nested) level of TADs and continue iterating. 

The likelihood approach allows the method to be easily extended to model conserved TADs across multiple cell lines. Assuming the cell lines are independent, the joint log likelihood can be written as the sum,
\begin{align}
\log p(\{A_{\ell}\};  X, Y, \{\Theta_{\ell}\}) = \sum_{\ell} \log p(A_{\ell}; X, Y, \Theta_{\ell}),
\end{align}
where $X$ represents the latent positions of common TADs, $Y$ is the set of CTCF peaks common to all cell lines; $A_{\ell}$ and $\Theta_{\ell}$ are the adjacency matrix and model parameters specific to cell line ${\ell}$. Similar to the single cell line case, the parameters can be estimated by using a plug-in estimator for each $\alpha_{\ell}$ and alternating between maximizing over $\pi$ and $\beta_{\ell}$, where $\pi$ is the relaxed form of $X$. 

\subsection{Theoretical guarantees}
In this section, we analyze the theoretical properties of the algorithm and discuss the asymptotic performance of the estimates. Given that we have relaxed the original likelihood, it is natural to first check whether the solutions of \eqref{eq_obj_llh} and \our agree. We have the following lemma stating optimizing the relaxed objective is essentially equivalent to optimizing the original one.

\begin{lemma}
For every given $\beta$, 
\begin{align}
\max_{\pi\in\Pi} L(A, Y, \beta, \pi) = \max_{X\in\sX} L(A, Y, \beta, X),
\end{align}
where $\Pi$ is the feasible set in \our and $\sX$ is the feasible set in \eqref{eq_obj_llh}. 
\end{lemma}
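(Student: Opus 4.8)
The plan is to show the two problems share a value by establishing that the relaxed feasible set $\Pi$ is the convex hull of the integral feasible set $\sX$, so that the objective, which for fixed $\beta$ is affine in $\pi$, attains the same maximum over both. First I would record the trivial inclusion $\sX\subseteq\Pi$, giving $\max_{\sX}L\le\max_{\Pi}L$ at once; the content is the reverse inequality. Since $\pi\mapsto L(A,Y,\beta,\pi)$ is affine, its maximum over the polytope $\Pi$ is attained at a vertex, so it suffices to prove that every vertex of $\Pi$ is integral (equivalently $\Pi=\operatorname{conv}(\sX)$), whence the maximizing vertex already lies in $\sX$ and the two optima coincide.

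Before attacking integrality I would simplify the constraints. The per-pair conditions $\sum_{a<b}Y_aY_b\pi_{ab}1_{i,j\in[a,b]}\le 1$ are redundant except for adjacent pairs: an interval $[a,b]$ contains the pair $(i,j)$ with $i<j$ iff it contains every elementary segment between consecutive positions $\ell,\ell+1$ with $i\le\ell<j$, so the whole family is equivalent to the $n-1$ segment constraints $\sum_{a\le\ell<b}Y_aY_b\pi_{ab}\le 1$. In the resulting incidence matrix each column (an interval) has its ones in consecutive rows, i.e.\ the matrix has the consecutive-ones property, is therefore an interval matrix, and hence is totally unimodular; adjoining the box constraints $0\le\pi\le1$ preserves this. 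It is also worth recording that the objective coefficients are well signed: substituting $\hat\alpha_{ab}$ collapses the bracketed terms, via a binary-entropy identity, into $\tfrac12 Y_aY_b N_{ab}\,\mathrm{KL}\!\left(\mathrm{Bern}(\hat\alpha_{ab})\,\|\,\mathrm{Bern}(\beta)\right)\ge 0$, where $N_{ab}$ is the number of pairs in the block, so every $\pi_{ab}$ enters with a nonnegative coefficient and the relaxation is a genuine interval-packing problem. For the packing polytope obtained by dropping the budget, total unimodularity forces integral vertices and $\max_{\Pi}L=\max_{\sX}L$ follows directly.

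The hard part — and the step I expect to be the main obstacle — is the cardinality constraint $\sum_{a<b}Y_aY_b\pi_{ab}\le K$. Its all-ones row does not follow the consecutive-ones pattern and can destroy total unimodularity: already for four positions with a long interval abutting three short ones, the segment-incidence rows together with the cardinality row form a submatrix of determinant $\pm 2$, and the corresponding budget-binding point is a genuinely fractional vertex. Hence integrality of $\Pi$ cannot be read off from the interval structure once the budget is active, and the equivalence must rest either on $K$ being chosen as a loose upper bound that is inactive at the optimum — in which case the packing argument above closes the proof — or on some further feature of the likelihood coefficients beyond their nonnegativity. A natural attempt at the general case is to recast \eqref{eq_relaxed_llh} as a longest-path problem on a layered directed acyclic graph indexed by position together with the number of intervals used so far, whose node–arc incidence matrix is a network matrix and therefore totally unimodular; I expect the crux to be reconciling the optimal value of that integral reformulation with $\max_{\Pi}L$, since the two need not coincide when the budget binds.
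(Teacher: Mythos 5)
You did not close the general case, but the obstruction you ran into is real, and it sits exactly where the paper's own proof is defective. The paper's argument has only two steps: (i) after substituting $\hat\alpha_{ab}$ the objective is affine in $\pi$ (you prove this too, and your identity for the coefficients, $\tfrac12 Y_aY_b N_{ab}\,KL(\hat\alpha_{ab}\Vert\beta)\ge 0$, is correct); and (ii) the bare assertion that the feasible set of \eqref{eq_relaxed_llh} is ``a convex polyhedron with vertices at $X$,'' i.e.\ that every vertex of $\Pi$ is integral. No justification for (ii) is given. Your analysis is the honest attempt at (ii): the reduction of the pair constraints to the $n-1$ segment constraints is correct, the consecutive-ones/total-unimodularity argument correctly settles the problem when the cardinality row is absent, and your determinant-$\pm2$ example is right. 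Concretely, for $n=4$, $K=2$, the point $\pi_{14}=\pi_{12}=\pi_{23}=\pi_{34}=\tfrac12$, $\pi_{13}=\pi_{24}=0$ makes the three segment constraints and the budget tight, and together with the two active lower bounds these six constraints are linearly independent (your determinant computation), so $\Pi$ has a genuinely fractional vertex. Step (ii) of the paper's proof is therefore false whenever the budget can bind; you have not missed an idea the paper has --- the paper asserts precisely the claim you refuted.

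Moreover, the lemma as stated (for every $\beta$, every $A$, every $K$) is in fact false, so no refinement of the coefficient structure could have closed your gap. Take $n=3m$ positions, disjoint blocks $I_1=[1,m]$, $I_2=[m+1,2m]$, $I_3=[2m+1,3m]$, $\beta=\tfrac12$, $K=2$, and data $A$ with within-block density $p=\tfrac12+\delta$ and density $q=\tfrac12+\tfrac{\delta}{4}$ on all remaining pairs. In units of $m^2\delta^2$, using $KL(\tfrac12+x\Vert\tfrac12)\approx 2x^2$: each block's coefficient is $\approx 1$; the spanning interval $L=[1,3m]$ has density $\tfrac12+\tfrac{\delta}{2}$ and coefficient $\approx 2.25$; a two-block interval such as $[1,2m]$ has density $\tfrac12+\tfrac58\delta$ and coefficient $\approx 1.5625$. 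The best integral configuration respecting $K=2$ is $\{[1,2m],I_3\}$ (or its mirror), worth $\approx 2.5625$, whereas the fractional point $\pi_{I_1}=\pi_{I_2}=\pi_{I_3}=\pi_L=\tfrac12$ is feasible and worth $\tfrac12(2.25+3)=2.625$. For $\delta$ small and $m$ large the strict inequality survives all Taylor and integrality errors, so $\max_{\Pi}L>\max_{\sX}L$. (This does not contradict Theorem~\ref{thm_main}, which requires $K\ge K^*+G^*$; here $K=2<3$.)

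What survives is exactly what you proved: if the cardinality constraint cannot bind --- in particular whenever $K\ge n-1$, since summing the segment constraints shows every feasible $\pi$ has $\sum_{a<b}\pi_{ab}\le n-1$ --- then $\Pi$ is the integral interval-packing polytope and the two optima coincide. Your ``inactive budget'' case can also be made rigorous when $K<n-1$: if some optimum $\pi^\ast$ of the budgeted LP has slack budget, then $\pi^\ast$ is optimal for the budgetless LP as well; the optimal face of that integral polytope has integral vertices, and since the budget of $\pi^\ast$ is a convex combination of the budgets of those vertices, one of them is within budget, which yields $\max_{\Pi}L=\max_{\sX}L$. So your proposal should be read not as an incomplete proof but as a correct proof of the true part of the statement together with a correct refutation of the paper's key claim: the lemma needs an additional hypothesis of this kind (or a restriction to the regime of Theorem~\ref{thm_main}) before it can be invoked, and your layered-DAG reformulation, which solves the integer problem, cannot repair this, for the reason you yourself suspected.
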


\begin{proof}
Given $\beta$, updating $\pi$ is equivalent to maximizing the function
\begin{align}
& L(A, Y, \Theta, \pi) 	\notag\\
= &  \frac{1}{2}\sum_{i\neq j} \sum_{a<b}Y_aY_b\pi_{ab} 1_{i,j\in [a,b]} \left[A_{ij} \log \frac{\hat{\alpha}_{ab}(1-\beta)}{(1-\hat{\alpha}_{ab})\beta}+\log\frac{1-\hat{\alpha}_{ab}}{1-\beta}\right] + \text{constant}	\notag\\
 := & l(A; \pi, \beta) + \text{constant}.
\end{align} 
Recalling $\hat{\alpha}_{ab}$ is independent of all the parameters, $l(A;\pi,\beta)$ is linear in $\pi$. Furthermore, the feasible set for $\pi$ given in \our is a convex polyhedron with vertices at $X$. Since the optimum for a linear function on a convex polyhedron is always attained at the vertices, it follows then maximizing $l(A;\pi,\beta)$ with respect to $\pi$ is equivalent to maximizing $l(A;X, \beta)$, which is the original objective. 
\end{proof} 
The above lemma implies it is valid to analyze the solution of \eqref{eq_obj_llh} even though the algorithm solves a relaxed problem.  Furthermore, the optimal $\pi$ for each run of step 1 in the algorithm belongs to the feasible set $\sX$ and defines a set of valid TAD positions (hence no thresholding is needed). 

Next we analyze the asymptotics of the alternating optimization algorithm given a reasonable starting value $\beta_0$ and the upper bound $K$ for the following setting. We consider the most general case where each position is allowed a CTCF peak so $Y_a$ will be omitted for the rest of the section. We focus on a single level of the hierarchical model and assume the $n\times n$ adjacency matrix $A$ contains $K^*$ TADs with $\{\alpha^*_1, \dots, \alpha^*_{K^*}\}$ as their connectivity probabilities. Note that to simplify notation, we have changed the subscript for $\alpha$ to a single index. The background has connectivity probabliity $\beta^*$. Let $\{[s_1,t_1], \dots, [s_{K^*}, t_{K^*}]\}$ be the TAD locations with the corresponding sizes $\{n^*_1, \dots, n^*_{K^*}\}$; $t_0=0$, $s_{K^*+1}=n+1$ for convenience. We consider the case where $K^*$ is fixed, $n^*_k/n \to p_k>0$ for all $k$. In addition the sizes of the inter-TAD regions also follow $(s_{k+1}-t_k-1)/n \to q_k$.  Denote the number of inter-TAD regions $G^*$. Define $KL(s\Vert t) = s\log(\frac{s}{t}) + (1-s)\log(\frac{1-s}{1-t})$. 

 Assume the given $\beta$ satisfies the following assumption:

\begin{assumption}
$\beta^* < \beta < \min_{k} \alpha^*_k$. 
\label{assump_1}
\end{assumption}

\begin{assumption}
For large enough $n$, \[
\left((s_j-t_i-1)^2 - \sum_{i<k<j} (n^*_k)^2 \right)KL(\beta^* \Vert \beta) < \sum_{i<k<j} (n^*_k)^2 KL(\alpha^*_k \Vert \beta)\] for all $j>i+1$. 
Note here $(s_j-t_i-1)$ is the segment between the end of the $i$th TAD and the beginning of the $j$th TAD. 
\label{assump_2}
\end{assumption}
Note that when $\beta=\beta^*$, Assumption \ref{assump_2} is trivially satisfied.

\begin{theorem}
Starting with $\beta^{(0)}$ satisfying Assumptions \ref{assump_1} and \ref{assump_2}, for any fixed $K$ and $K$ large enough such that $K\geq K^*+G^*$, the optimal $X$ satisfies \begin{align}
\exp\left\{ \max_{X\in\sX} l(A; X, \beta^{(0)}) \right\} = \exp \left\{ l(A; X_0, \beta^{(0)})\right\} (1+o_P(1)),
\end{align}
where $X_0$ is such that $X_{s_k, t_k} =1$ for all $1\le k \le K^*$ and $X_{t_{i}+1, s_{i+1}-1} = 1$ for all $0\leq i \leq K^*$. Furthermore, at the next iteration $\beta^{(1)} = \beta^*+O_P(n^{-1/2})$. 
\label{thm_main}
\end{theorem}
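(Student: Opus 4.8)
The plan is to rewrite $l(A;X,\beta)$ as a sum of independent block contributions, establish that the true partition maximizes it in the population, and then lift this to a high-probability statement by concentration; the second part (the $\beta$-update) then follows by plugging in the recovered structure.

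\textbf{Step 1 (reduction to additive block scores).} Substituting the plug-in $\hat\alpha_{ab}$ into $l(A;X,\beta)$, I would verify that each selected block $[a,b]$ contributes exactly $\tfrac{1}{2}N_{ab}\,KL(\hat\alpha_{ab}\Vert\beta)$, where $N_{ab}$ is the number of ordered pairs inside $[a,b]$ and $\hat\alpha_{ab}$ is the MLE; since the chosen blocks are non-overlapping, $l(A;X,\beta)=\sum_{X_{ab}=1}\tfrac12 N_{ab}\,KL(\hat\alpha_{ab}\Vert\beta)$. Every term is nonnegative, so under the budget $K\ge K^*+G^*$ there is no incentive to leave positions uncovered, and the maximization reduces to an optimal interval-partition problem. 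I would also replace $\hat\alpha_{ab}$ by its mean $\bar\rho_{ab}$ (the true average density of the pairs in $[a,b]$) to obtain the corresponding population objective.

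\textbf{Step 2 (population optimality of $X_0$).} I would argue $X_0$ uniquely maximizes the population objective through three comparisons. First, splitting a homogeneous segment discards the cross-pairs between the two pieces, each carrying a strictly positive $KL(\cdot\Vert\beta)$ under Assumption \ref{assump_1}, so splits strictly lose (equivalently, by strict convexity of $\rho\mapsto KL(\rho\Vert\beta)$). Second, shifting a TAD boundary by one position moves a strip of $\Theta(n)$ background pairs into a TAD block; a tangent-line computation shows that, because $\beta^*<\beta<\alpha^*_k$, the tangent to $KL(\cdot\Vert\beta)$ at $\alpha^*_k$ lies strictly below zero at $\beta^*$, so the objective drops by $\Theta(n)$, and the same holds when shrinking a block — hence the true boundaries are strict maxima. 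Third, and this is the step I expect to be the main obstacle, a single block that spans several consecutive TADs together with the gaps between them cannot be ruled out by convexity alone: convexity only upper-bounds the merged score by crediting \emph{all} absorbed background pairs, which exceeds the within-gap background that $X_0$ collects, so it over-credits the merge. This merge is advantageous precisely when the background pairs it absorbs outweigh the intra-TAD signal, and ruling it out is exactly the content of Assumption \ref{assump_2}; I would show Assumption \ref{assump_2} is equivalent, to leading order and with the stated strict inequality, to ``the fine structure of $X_0$ beats every such coarse merge'', which degenerates trivially when $\beta=\beta^*$. The resulting population gaps are $\Theta(n)$ for boundary shifts and $\Theta(n^2)$ for structural changes.

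\textbf{Step 3 (from population to probability).} For fixed $K$ the feasible set $\sX$ contains only $n^{O(K)}$ configurations. For a competitor $X\neq X_0$ differing from $X_0$ on $R$ pairs, the log-likelihood-ratio difference $l(A;X,\beta^{(0)})-l(A;X_0,\beta^{(0)})$, after linearizing each $KL$ term (the quadratic remainder being of smaller order), is a linear functional of the independent entries $A_{ij}$ over those $R$ pairs, so it concentrates around its population value at scale $\sqrt{R}$ by a Bernstein bound. Since every binding competitor has population gap $\Delta$ with $\Delta^2/R\gg\log n$ (for boundary shifts $\Delta^2/R\asymp n$, and for the Assumption \ref{assump_2} merges $\Delta^2/R\asymp n^2$), a union bound over the $n^{O(K)}$ configurations gives $\max_{X\in\sX} l(A;X,\beta^{(0)})=l(A;X_0,\beta^{(0)})$ on an event of probability $1-o(1)$. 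Exponentiating this identity yields the stated factor $1+o_P(1)$.

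\textbf{Step 4 ($\beta^{(1)}$).} On the same event, the Lemma guarantees the $\pi$-update in step 1 of the algorithm attains its optimum at a vertex of $\Pi$, which we may take to be $X_0$. Plugging $X_0$ into the closed form for $\hat\beta$ shows $\beta^{(1)}$ equals the empirical edge density over the cross-pairs between distinct blocks of $X_0$, all of which have true probability $\beta^*$. These are $\Theta(n^2)$ independent $\mathrm{Bernoulli}(\beta^*)$ variables, so by the central limit theorem $\beta^{(1)}=\beta^*+O_P(n^{-1})$, which in particular gives the claimed $\beta^{(1)}=\beta^*+O_P(n^{-1/2})$.
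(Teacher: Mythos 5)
Your overall architecture (population optimality of $X_0$, then concentration plus a union bound, then plugging $X_0$ into the closed form for $\hat\beta$) parallels the paper's, and two of your choices are genuine improvements: exploiting contiguity to note that $|\sX| = n^{O(K)}$ for fixed $K$, so that a Bernstein-plus-union-bound argument suffices and even yields exact recovery of the argmax (the paper instead runs block-model profile-likelihood machinery in the style of Bickel--Chen: an entropy factor $K^{n+1}$, a split into configurations far from and near $X_0$, and the local derivative condition \eqref{eq_deriv}); and your Step 4, which the paper's appendix never writes out, is correct, with $O_P(n^{-1})$ indeed implying the claimed $O_P(n^{-1/2})$.

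However, there is a genuine gap in Step 2, in the sentence ``and the same holds when shrinking a block.'' It does not. Extending and shrinking a TAD block are not symmetric moves, because in $X_0$ the adjacent gap is itself a covered block, and the two moves exchange \emph{different} pair sets, of different sizes, at different rates. Concretely, replace the TAD block $[s_k,t_k]$ by $[s_k,t_k-1]$ and the gap block $[t_k,s_{k+1}]$ by $[t_k-1,s_{k+1}]$; writing $m=t_k-s_k$ and $g=s_{k+1}-t_k$, the population objective changes by
\begin{align*}
-(2m-1)\,KL(\alpha^*_k\Vert\beta) + (2g+1)\,KL(\beta^*\Vert\beta) + o(n),
\end{align*}
which is \emph{positive} -- the shrink wins -- whenever $g\,KL(\beta^*\Vert\beta) > m\,KL(\alpha^*_k\Vert\beta)$. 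This is perfectly compatible with Assumption \ref{assump_1}: take $\beta^*=0.1$, $\beta=0.55$, $\alpha^*_k=0.6$, $m=0.1n$, $g=0.4n$, so that $KL(\beta^*\Vert\beta)\approx 0.45$ while $KL(\alpha^*_k\Vert\beta)\approx 0.005$. Your tangent-line computation controls only the extension direction (where the TAD block absorbs background pairs and the bracket $KL(\alpha^*_k\Vert\beta)+f'(\alpha^*_k)(\beta^*-\alpha^*_k)<0$ applies); in the shrink direction the relevant comparison is $m\,KL(\alpha^*_k\Vert\beta)$ against $g\,KL(\beta^*\Vert\beta)$, which is exactly the trade-off Assumption \ref{assump_2} exists to forbid: its instance $(i,j)=(k-1,k+1)$ gives
\begin{align*}
2mg\,KL(\beta^*\Vert\beta) \le \left((s_{k+1}-t_{k-1}-1)^2-m^2\right)KL(\beta^*\Vert\beta) < m^2\,KL(\alpha^*_k\Vert\beta),
\end{align*}
which rules the shrink out with an $\Theta(n)$ margin. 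So the repair is to invoke Assumption \ref{assump_2} not only for coarse merges but for boundary shrinks, and more generally for any block partially overlapping a TAD -- this is what the paper's Lemma \ref{lem_beta} does through its row-by-row case analysis (its cases 2 and 3 are precisely these partial-overlap configurations). As written, your Step 2 would certify population optimality of $X_0$ in parameter regimes where it is false, so the case analysis, and the quantitative gaps it feeds into your Step 3 union bound, need this correction; once corrected, the rest of your argument goes through.
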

We defer the proof to Appendix~\ref{sec:proofs}. We have the following remarks.
\begin{enumerate}
\item
Note that each $X\in\sX$ partitions the nodes into $K+1$ classes, given the partition the distribution of the edges follows a block model and the proofs utilize relevant techniques in this literature. 
\item
The theorem states that given an appropriate initial $\beta^{(0)}$, the optimal configuration found by the algorithm includes all the TADs as well as the inter-TAD regions. In the next section, we propose a nonparametric test to check enriched interactions within each candidate region called by the algorithm. 
\item
More importantly, the same optimal $X_0$ is found for any choice of fixed $K$, $K$ being large enough. This implies the overfitting problem does not pose a serious concern here since increasing $K$ does not always lead to an increase in the number of candidate TADs. In practice, a reasonable way to choose $K$ is to increase it incrementally until the number of candidate TADs found starts to saturate. 
\end{enumerate}

\subsection{Post-processing}
\label{sec:post}
After our algorithm detects the (possibly nested) TAD's, our goal is to see if these indeed have higher contact frequencies than the surrounding region or the parent TAD.  Recall that the contact frequency matrix $M$ has non-negative weights which are truncated to generate the adjacency matrix of the network. These weights $M_{ij}$, typically decay as $d=|i-j|$ grows~\citet{}. In order to detect TAD's with significantly enriched contact frequencies over the surrounding region, we assume the model in Equation~\ref{eq:hic-decay}. The main idea is that
within a TAD, they decay slowly, whereas in the surrounding regions of a TAD they decay faster. Once we have detected the TADs using our linear program, we use these weights to prune weakly connected TADs. 
Consider the base level; let us assume that we have identified a TAD between positions $a$ and $b$ on the genome. Let the upper triangular region of the this TAD be denoted by $R$. Now consider the upper triangular region of the square between $a-\frac{a-b}{2}$ and $b+\frac{a-b}{2}$. Denote this by $S$. 
We assume the following simple model that dictates how the weights decay within and outside a TAD. Consider two monotonically decaying functions $f,g: \mathbb{N}\rightarrow \mathbb{R}^+\cup \{0\}$, such that $f(d)>g(d)$ $\forall d\in \mathbb{N}$, i.e. $f(d)$ dominates $g(d)$ for any $d$.

\begin{figure}
\begin{minipage}{.3\textwidth}
	\includegraphics[width=.9\textwidth]{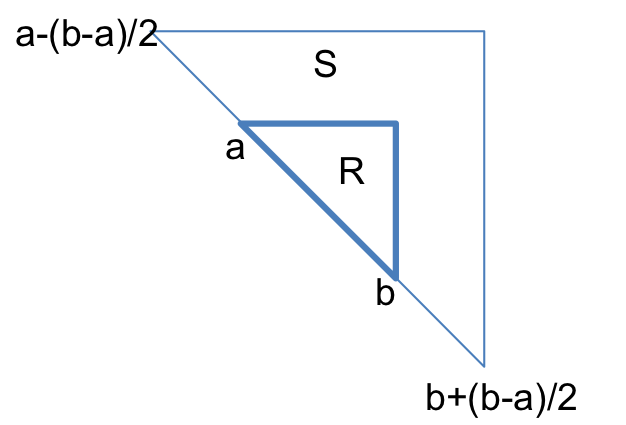}
\end{minipage}
\begin{minipage}{.7\textwidth}
	\begin{align}
	\label{eq:hic-decay}
	M_{ij}=\begin{cases}
	f(|i-j|)+\epsilon_{ij} & \mbox{$i,j\in R$}\\
	g(|i-j|)+\epsilon_{ij}& \mbox{$i,j\in S\setminus R$}\\
	\end{cases}
	\end{align}	
\end{minipage}
\end{figure}

Here $\epsilon_{ij}$ are pairwise independent noise random variables. 

\paragraph{Testing:} In order to perform a test, for all $d\in\{1,\dots, (b-a)\}$, we calculate 

\begin{align*}
\hat{f}(d)=\frac{\sum\limits_{|i-j|=d,i,j\in R}M_{ij}}{b-a+1-d}\qquad \qquad
\hat{g}(d)=\frac{\sum\limits_{|i-j|=d,i,j\in S\setminus R}M_{ij}}{b-a}\\
\end{align*}
Now we take the two sequences $\hat{f}$ and $\hat{g}$ and do a nonparametric rank test (two sample Wilcoxon test) to determine whether $\hat{f}$ dominates $\hat{g}$; if the p-value is smaller than a chosen threshold, we consider the TAD to have significant enrichment over its surrounding neighborhood. Otherwise we discard the TAD. 
For nested TADs, we are interested in determinig whether a TAD found inside a parent TAD (call this $T_0$) is significant. In such cases, the surrounding region $S$ may go across $T_0$. So we simply truncate the outer region so that it does not cross outside $T_0$.

\begin{figure}[h!]
\centering
\subfloat[]{\includegraphics[width = .44\textwidth]{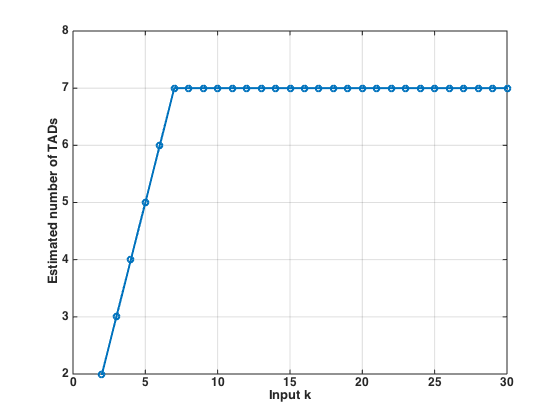}}
\subfloat[]{\includegraphics[width = .44\textwidth]{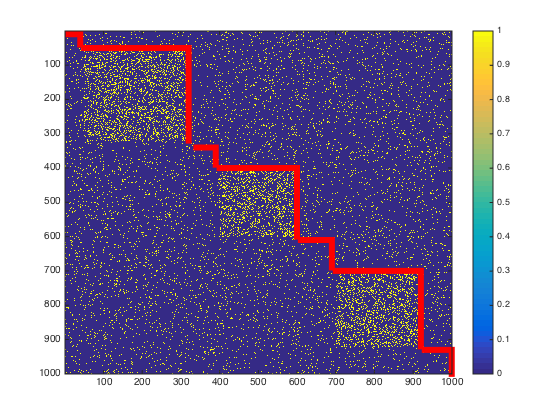}}\\
\caption{(a) The $y$ axis shows the estimated number of clusters $K$, whereas the $x$ axis shows increasing values of $K$. (b) shows the clustering for input $K=30$. }
\label{fig:model-sel}
\end{figure}

\begin{figure}[h!]
\centering
\subfloat[]{\includegraphics[width = .5\textwidth]{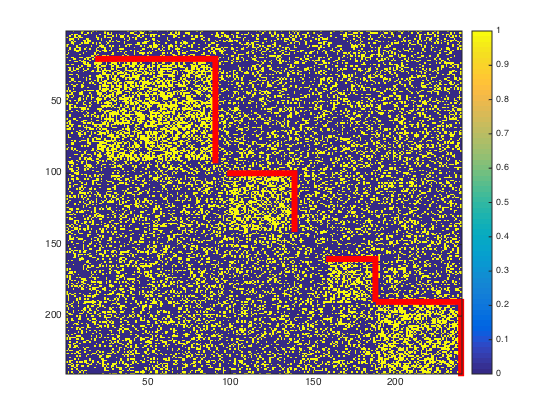}}
\subfloat[]{\includegraphics[width = .5\textwidth]{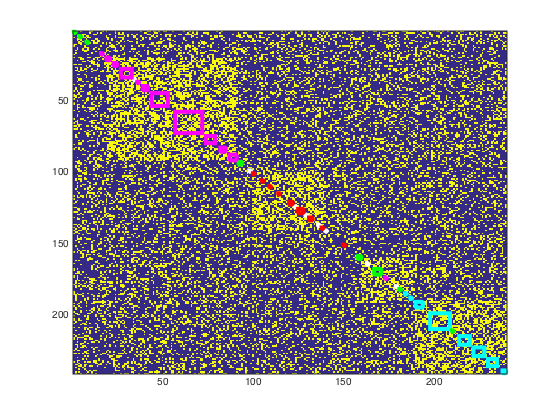}}\\
\caption{Clusters identified by (a)~\our and (b) SC. In (a) and (b) different colored squares correspond to different clusters detected by the algorithms. The ideal setting is to see a whole TAD encompassed by one square.}
\label{fig:sim-vs-spec}
\end{figure}

\section{Results}
\label{sec_results}
We first demonstrate key properties of our inference algorithm via simulation experiments, and then provide elaborate real data results.

\subsection{Simulations}
\label{sec_simulation}

\subsubsection*{Data simulated under the simple model}
First following the basic model described in Section~\ref{sec_methods} and Eq~\eqref{eq_obj_llh}, we present two sets of experiments to a) show the robustness of our algorithm~\our to the pre-specified number of clusters, and b) compare with the Spectral Clustering (SC) algorithm. For all the simulations in this setting, all TADs have the same linkage probability $\alpha$ and the background has linkage probability $\beta$.

In our first set of experiments (Figure~\ref{fig:model-sel} (a) and (b)), we show that with somewhat balanced (but not necessarily equal) block-sizes,~\our returns the correct TAD's along with some holes, as shown in Theorem~\ref{thm_main}. Recall that, in our linear program, we use a constraint to specify an upper bound on the number of TADs. This constraint is given by $\sum_{ij}\pi_{ij}\leq K$, where $\sum_{ij}\pi_{ij}$ represents the number of TADs. In Figure~\ref{fig:model-sel} (a) we plot $\sum_{ij}\pi_{ij}$ after one iteration of the linear program, for the adjacency matrix in Figure~\ref{fig:model-sel} (b). To be concrete, we set $n=1000, \alpha=.2,\beta=.05$, and three TADs of sizes $270$, $200$, and $220$. We also created CTCF sites at every 10 nodes for this experiment. We see that even though $K$ is increased to thirty, the estimated number of clusters levels off at $7$, which is precisely three TADs plus four inter-TAD regions,  which illustrates our asymptotic result from Theorem~\ref{thm_main}. These TADs detected by~\our are illustrated in Figure~\ref{fig:model-sel} (b). While one can come up with simple tests to eliminate the ``spurious'' TADs, we saw that for real data, our post processing step (see Section~\ref{sec:post})  eliminates them effectively for both the base level and nested TADs. 

In the second set of simulations (Figure~\ref{fig:sim-vs-spec} (a), (b)), we show that SC often yields clusters with holes, i.e. clusters that are not contiguous, whereas we do not. SC is one of the most commonly used algorithms for community detection in networks. It involves performing spectral decomposition on a similarity matrix obtained from the data. For networks, one typically uses the normalized adjacency matrix 
 defined as $D^{-1/2}A D^{-1/2}$ where $D$ is the diagonal matrix of degrees, i.e. $D=diag(d_i)$, $d_i=\sum_{j}A_{ij}$. Now for clustering the nodes into $K$ blocks, one applies k-means clustering to the top $K$ eigenvectors (\cite{rohe2011}). For Figure~\ref{fig:sim-vs-spec} (a) and (b), we set $n=240$, four TADs with sizes $ (70, 40, 30, 50)$. The fifth cluster is the background.   We use $\alpha=.5,\beta=.25$. In order to have a fair comparison, we do not include CTCF sites for LP-OPT, since SC is not designed to use them either. For both methods, we assume the correct number of blocks is given. In order to be as favorable as possible to SC, we use $4$ top eigenvectors, and use k-means with $k=5$ on these eigenvectors, since the background minus the TADs is one cluster. The results from conventional SC (choosing $5$ top eigenvectors and using k-means with $k=5$) are worse and hence omitted. For SC the plot reflects the clusterings returned: the colors correspond to different clusters. A square corresponds to a maximal contiguous set of nodes assigned to a cluster. For example, the last TAD (190-240) is assigned to the cyan cluster by SC. However, SC also assigns some nodes from the penultimate TAD (160-190) to this cluster, and moreover the small cyan boxes show that there are many nodes from the last TAD, which are assigned to other clusters, i.e. in this setting, SC is unable to create a contiguous cluster will all nodes from one TAD.

We want to point out that while we did the above experiments for Figure~\ref{fig:sim-vs-spec} (a) without the CTCF sites for fairness, including the CTCF sites greatly improves the computational time of~\our. To be concrete, we simulated 10 random networks with the above setting, and obtained the clusterings with and without the CTCF sites. With CTCF sites~\our converges in 0.5 seconds on average, whereas without CTCF sites, the average computation time  is 58 seconds.

In Section~\ref{subsec_sim_supp} of the supplementary file we include additional comparison with SC for varying signal to noise ratio. 

\subsubsection*{Data simulated using real data distribution}
We next used a more realistic framework to simulate Hi-C data for chromosome 21 at a resolution of 40 kb, a typical resolution at which Hi-C data are analyzed. TAD positions were generated artificially and contact frequencies were sampled using empirical distributions from a real Hi-C dataset on chromosome 21 provided in \citet{rao2014}. A detailed description of the framework can be found in Section~\ref{subsec_sim_supp2} of the supplementary file. CTCF sites were generated as the union of the true TAD boundaries and randomly sampled positions along the chromosome. 

Our procedure led to a $1204\times 1204$ contact frequency matrix, which was processed using a moving window of length 300 with an overlap of 50. The contact frequencies in each $300\times 300$ segment were thresholded at the $q$-th quantile to produce a binary adjacency matrix. Between two adjacent windows, any TADs called by the algorithm falling into overlapping regions are resolved as follows. i) If the end point of the TAD is the last CTCF site in the first window, it is extended to the first CTCF site in the second window (similarly if the start point of the TAD is the first CTCF site in the second window; ii) If one TAD is contained in another, the nested one is taken; iii) If two TADs have a significant overlap (Jaccard index $> 0.8$, defined in Section~\ref{subsec_real_data}), they are merged by taking the intersection. A similar procedure is used on the real data (Section~\ref{subsec_real_data}).

Table~\ref{tab_nmi} compares the TADs found by our algorithm with ground truth using normalized mutual information (NMI) for different choices of the threshold $q$ and an increasing number of randomly sampled CTCF positions. Note that the last column corresponds to the case where every position is a CTCF site, since the data generated contains 42 true TADs. In other words, we do not provide the algorithm with partial ground truth. As expected, the performance is better when partial ground truth is supplied but remains overall stable for reasonable choices of $q$. Figure~\ref{fig_simulated_our} displays a 24mb segment of the simulated data with TADs found by our method. \our was run without additional CTCF information and still achieved high similarity with ground truth. 

In comparison, under similar thresholding levels SC achieves a NMI around 0.86-0.89 when the correct cluster number $K=43$ (the last cluster being the background) is given, and the TADs found contain holes as described above. In addition, we compare our method with two recently proposed TAD detection algorithms, 3DNetMod \citep{norton2018} and MrTADFinder \citep{yan2017}, which are both based on community detection methods in network analysis. The best NMI achieved by MrTADFinder  for a range of tuning parameter values is 0.55. 3DNetMod had difficulty finding TADs on this dataset. These two methods will be included for comparison in the subsequent real data analysis. More details and visual comparisons can be found in Section~\ref{subsec_sim_supp2} of the supplementary file.

\begin{table}[ht]
\centering 
\begin{tabular}{c c c c c c}
	& \multicolumn{5}{c}{Normalized mutual information}	\\
\# Random CTCF sites & 50 & 100 & 300 & 600 & 940\\
\hline\hline
$q=0.88$ &  0.90 & 0.91 & 0.90 & 0.89 & 0.89 \\
$q=0.9$ & 0.94 & 0.92 & 0.92 & 0.91 & 0.92	\\
$q=0.95$ & 0.97 & 0.98 & 0.95 & 0.93 & 0.92	\\
$q=0.98$ & 0.98 & 0.96 & 0.89 & 0.87 & 0.86	\\
\hline
\end{tabular}
\caption{Normalized mutual information measuring the quality of the TADs found vs. ground truth.}
\label{tab_nmi}
\end{table}

\begin{figure}[h!]
\centering
\subfloat[Ground truth]{\includegraphics[width = 2.5in]{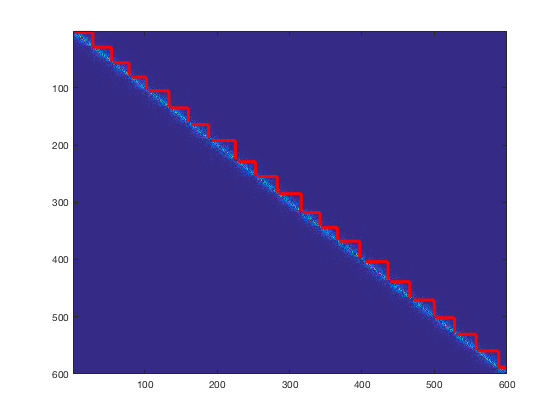}}
\subfloat[\our]{\includegraphics[width = 2.5in]{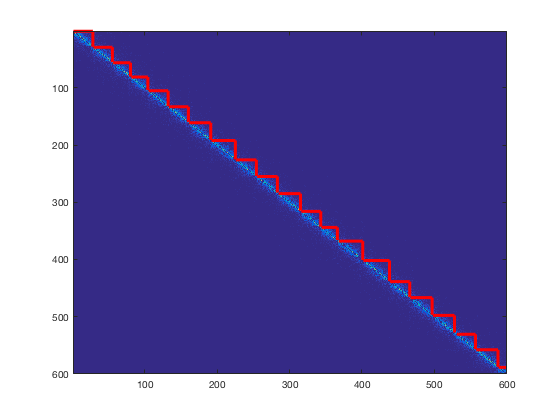}}\\
\caption{chr21:1-24000000, a 24mb segment. TADs identified by \our compared with ground truth. Note that \our was run without CTCF information thresholded at 95-th quantile.}
\label{fig_simulated_our}
\end{figure}

\subsection{Real data}
\label{subsec_real_data}
Using the deep-coverage Hi-C data provided in \citet{rao2014}, we ran \our to identify cell-type specific TADs in five cell types (GM12878, HMEC, HUVEC, K562, NHEK) and common TADs conserved in all of them. We present here a comprehensive analysis of the results from chromosome 21. Similar analysis was also performed on chromosome 1, the results of which can be found in Section~\ref{subsec_supp_chr1} of the supplementary file. Following \citet{rao2014}, the raw contact frequency matrix was normalized using the matrix balancing algorithm in \citet{knight2013}. Using data with 10kb resolution, the contact frequency matrix of this chromosome has more than 4800 bins. CTCF peaks for each cell type were obtained from the ENCODE pipeline \citep{encode2012} and converted into a binary vector of the same resolution as the contact frequency matrix, where each entry represents whether or not the corresponding genome bin contains at least one CTCF peak. This led to around 900 non-zero entries in each cell type. In the combined analysis for common TADs, we took the intersection of the cell-type specific CTCF binary vectors, so an entry is one only when the genome bin contains at least one CTCF peak in all cell types.    

We performed TAD calling for three levels, each level with its own quantile thresholding parameter. At the base level, we processed the chromosome using a moving window of length 300 (3mb) with an overlap of 50. The contact frequencies in each $300\times 300$ segment was thresholded at 90\% quantile ($q_1=0.9$) to produce a binary adjacency matrix. Note that by using a moving window, we avoided using one universal threshold for the entire chromosome, which contains active and inactive regions with different chromatin interaction patterns. Any overlaps between two adjacent windows are resolved using the rules described in Section~\ref{sec_simulation}. The TADs called at the base level were then post-processed using the nonparametric test described in Section \ref{sec:post}, and only those passing a p-value cutoff (in this case 0.05) were retained for further TAD calling. For the second level, we thresholded the contact frequencies inside the base-level TADs at 50\% quantile ($q_2=0.5$), followed by running the algorithm and post-processing. The same steps were followed for the third level with $q_3=0.5$. For all three levels, the p-value cutoff was chosen to be 0.05. As a side note, correcting for multiple testing at a false discovery rate (FDR) of 0.05 made almost no difference at the base level. However, the same FDR cutoff led to fewer TADs being called at the nested levels. This is unsurprising as the power of the nonparametric test decreases as the number of data points available decreases at the nested levels.

The combined analysis for conserved TADs was performed in the same way, using the algorithm described in Section \ref{subsec_est}. The nonparametric test was run on the called regions for all cell types, and we required all the p-values to be smaller than the cutoff 0.05.    

\subsubsection*{Choice of thresholds}
\label{subsec_threshold}
We first checked the robustness of the results using different thresholding levels and biological replicates. Table \ref{tab_consistency} shows the number of TADs identified under different scenarios and with significant overlap. To compare two TADs $S, T$ from two different sets, we measure the Jaccard index $J(S,T) = \frac{|S\cap T|}{|S\cup T|} $. When the Jaccard index is high enough, there is a one-to-one correspondence between TADs in the two sets. The first two rows in the table show different thresholds at the base level still lead to quite consistent results. Varying $q_2, q_3$ between 0.4-0.6 does not lead to noticeable changes and the results are hence omitted. Since two biological replicates (primary and replicate) are available for GM12878, we examined the consistency between them and the combined data, and the results are shown in row 3 and 4 of the table. Finally, as the current results were obtained using normalized data, we compared them with the case using the raw contact frequency matrix (row 5). This case still shows a reasonable degree of consistency despite having the lowest amount of overlap among all. 

\begin{table}[ht]
\centering 
\begin{tabular}{c c c}
\# TADs	\\
\hline\hline
$q_1=0.85$ (GM12878) & $q_1=0.9$ (GM12878) & Jaccard index $>0.7$	\\
 85 & 81 & 70	\\
 \hline
 $q_1=0.85$ (HMEC) & $q_1=0.9$ (HMEC) & Jaccard index $>0.7$		\\
 123 & 114 & 103	\\
 \hline
 Primary (GM12878) & Replicate (GM12878) & Jaccard index $>0.7$	\\
 90 & 83 & 74	\\
  \hline
 Primary (GM12878) & Combined (GM12878) & Jaccard index $>0.7$	\\
 90 & 81 & 80	\\
 \hline
 Normalized (GM12878) & Raw (GM12878) & Jaccard index $>0.7$		\\
 81 & 94 & 61	\\
 \hline
\end{tabular}
\caption{Number of TADs detected under different scenarios and with significant overlap}
\label{tab_consistency}
\end{table}

\subsubsection*{Enrichment of histone marks at boundaries}

One of the most commonly used criteria for checking the accuracy of TAD boundaries is to count the number of histone modification peaks nearby \citep{filippova2014,weinreb2016} and taking higher levels of histone activity as indicators for the start and end points of TADs. The histone data are available in \citet{kellis2014} and the processed data were downloaded from \url{https://sites.google.com/site/anshulkundaje/projects/encodehistonemods}. From bin indices, we obtain the coordinates of TAD boundaries by taking the midpoint of every genome bin. Table \ref{tab_average_peaks} shows the average number of peaks within 15kb upstream or downstream from each detected boundary point for various types of histone modification. We compared \our with 3DNetMod \cite{norton2018}, MrTADFinder \cite{yan2017}, and the Arrowhead domains originally reported in \citet{rao2014}. We found that MrTADFinder produced domains quite different from the other three methods (Fig~\ref{fig_example_all_methods} in the supplementary file) and the domain boundaries show less enrichment of histone marks. We have thus omitted the method from further comparison. The tuning parameters for 3DNetMod were chosen so that the number of TADs found is roughly comparable to the other two methods. In Table~\ref{tab_average_peaks}, counting the number of times each method achieves the highest enrichment, \our and 3DNetMod outperform Arrowhead with \our being slightly better than 3DNetMod. In addition, we note that \our is significantly faster than 3DNetMod, taking about 10 minutes on chromosome 21 (and 40 minutes on chromosome 1) using one core on a 3.1 GHz Intel Core i5 processor. In comparison, 3DNetMod takes more than 40 minutes on chromosome 21 (and 4 hours on chromosome 1) requiring four cores on the same processor. The results of 3DNetMod are also quite sensitive to the choice of tuning parameters.


\begin{table}[ht]
\centering 
\begin{tabular}{c c c c c c} 
\hline\hline
 & \multicolumn{5}{c}{GM12878}	\\
 \cline{2-6}
&  \# domains & H3k9ac & H3k27ac & H3k4me3 & Pol II \\ [0.5ex] 
\our & 81 & $1.35$ & $1.68$ & $1.16$ & $\bf 1.29$ \\ 
Arrowhead & 96 & $\bf 1.40$ & $1.60$ & $\bf 1.29$ &$1.22$ \\
3DNetMod & 129 & $1.36 $ & $\bf 1.82$  & $ 1.25$ & $1.06 $\\
\hline 
 & \multicolumn{5}{c}{HUVEC}	\\
 \cline{2-6}
 &  \# domains & H3k9ac & H3k27ac & H3k4me1 & H3k4me3  \\ [0.5ex] 
\our & 106 & $\bf 1.07$ & $\bf 1.16$ & $\bf 2.17$ & $\bf 0.84$  \\ 
Arrowhead & 59 & $1.02$ & $1.08$ & $2.06$ & $0.83$  \\
3DNetMod & 125 & $0.96$ & $1.08$ & $ 1.82$ & $0.68$\\
\hline 
 & \multicolumn{5}{c}{HMEC}	\\
 \cline{2-6}
 &  \# domains & H3k9ac & H3k27ac & H3k4me1 & H3k4me3  \\ [0.5ex] 
\our & 114 & $\bf 1.06 $ & $\bf 1.49$ & $3.05$ & $0.73$ \\ 
Arrowhead & 44 & $1.02$ & $1.46$ & $\bf 3.09$ & $\bf 0.81$  \\
3DNetMod & 122 & $0.92$ & $1.24 $ & $2.67$ & $0.77$\\
\hline
 & \multicolumn{5}{c}{NHEK}	\\
 \cline{2-6}
& \# domains & H3k9ac & H3k27ac & H3k4me1 & H3k4me3  \\ [0.5ex] 
\our & 112 & $1.21$ & $1.32$ & $2.70$ & $0.92$ \\ 
Arrowhead & 78 & $0.99$ & $1.12$ & $2.19$ & $0.68$ \\
3DNetMod & 136 & $\bf 1.42$ &  $\bf 1.55$  & $\bf 2.91$  & $\bf 1.03$  \\
\hline
 & \multicolumn{5}{c}{K562}	\\
 \cline{2-6} 
 &  \# domains & H3k9ac & H3k4me1 & H3k4me3 & Pol II \\ [0.5ex] 
\our &  91 & $0.76$ &  $\bf 2.31$ & $\bf 0.98$ & $0.62 $ \\
Arrowhead &  82 & $0.57$ &  $1.82$ & $0.82$ & $0.55$ \\
3DNetMod & 101 & $\bf 0.79$ & $2.25$ & $0.95$ & $\bf 0.71$ \\
\end{tabular}
\caption{Average number of histone modification peaks $\pm 15$kb upstream or downstream from the boundary points.}
\label{tab_average_peaks}
\end{table}

\subsubsection*{Conserved and cell-type specific TADs}

Although commonly used, the metric in Table \ref{tab_average_peaks} does not consider epigenetic features inside each TAD, which are particularly important for confirming shared regulatory structures and mechanisms across different cell types. We first examine the histone modification peaks within highly conserved TADs, which are defined as i) TADs identified in the combined analysis of all cell types (denote this set $\mathcal{I}_c$), and ii) if for $S\in\mathcal{I}_c$, $\max_{T\in \mathcal{I}_i}J(S,T) >0.7$ for all $i$, where $\mathcal{I}_i$ is the set of TADs identified in cell type $i$. Out of the 50 TADs found in the combined analysis, 29 of them satisfy ii). 

Figure \ref{fig_sig_track_common} shows the signal tracks for all five cell types inside one of the 29 conserved TADs (chr21:35275000 - 35725000) for two types of histone modifications (UCSC genome browser). The signal peaks are visibly correlated between cell types. Using ChIP-seq signals from the ENCODE pipeline, the average pairwise correlations between cell types for this TAD were calculated for different histone modifications. For H3k9ac, H3k27ac, H3k4me1 and H3k4me3, the average correlations are 0.69, 0.80, 0.58, 0.89 respectively. Figure \ref{fig_boxplot_common_random} compares the average pairwise correlations inside all 29 conserved TADs with 50 randomly chosen regions of length 290kb (median length of the conserved TADs) on chromosome 21 for two instances of histone modification. The two-sample Wilcoxon test has p-values 0.05 and 0.006 for H3k27ac and H3k4me1; the results for H3k9ac and H3k4me3 are similar. 


\begin{figure}[h!]
\centering
\subfloat[Histone modification H3k27ac, average pairwise correlation 0.80]{\includegraphics[width = 5in]{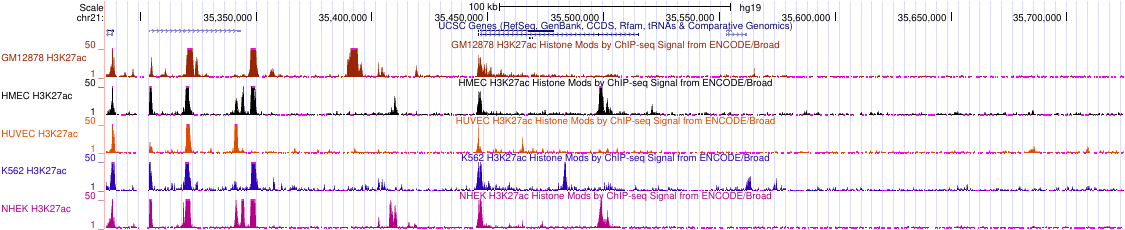}}\\
\subfloat[Histone modification H3k4me3, average pairwise correlation 0.89]{\includegraphics[width = 5in]{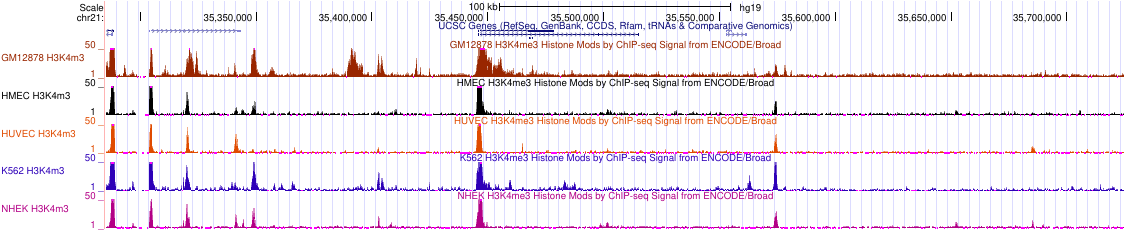}}\\
\caption{histone signal tracks within chr21:35275000 - 35725000}
\label{fig_sig_track_common}
\end{figure}

\begin{figure}[h!]
\centering
\subfloat[H3k27ac]{\includegraphics[width = 2.6in]{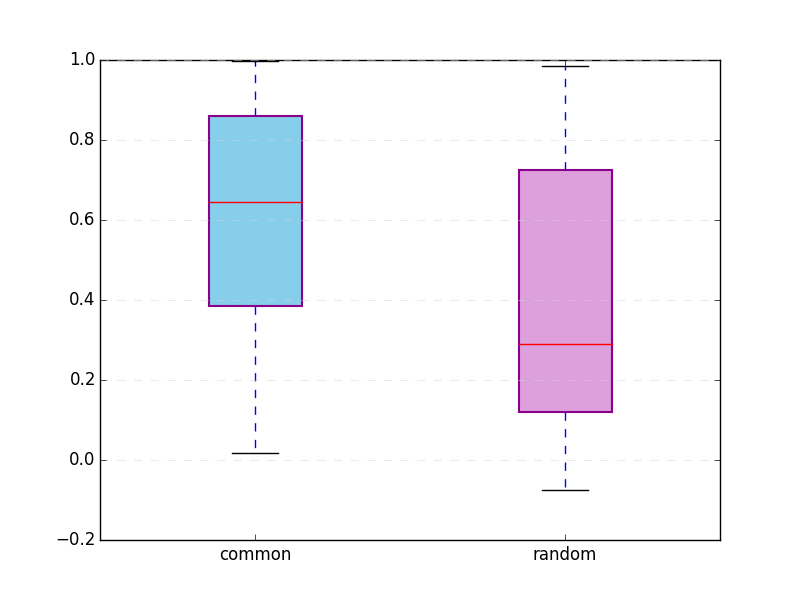}}
\subfloat[H3k4me1]{\includegraphics[width = 2.6in]{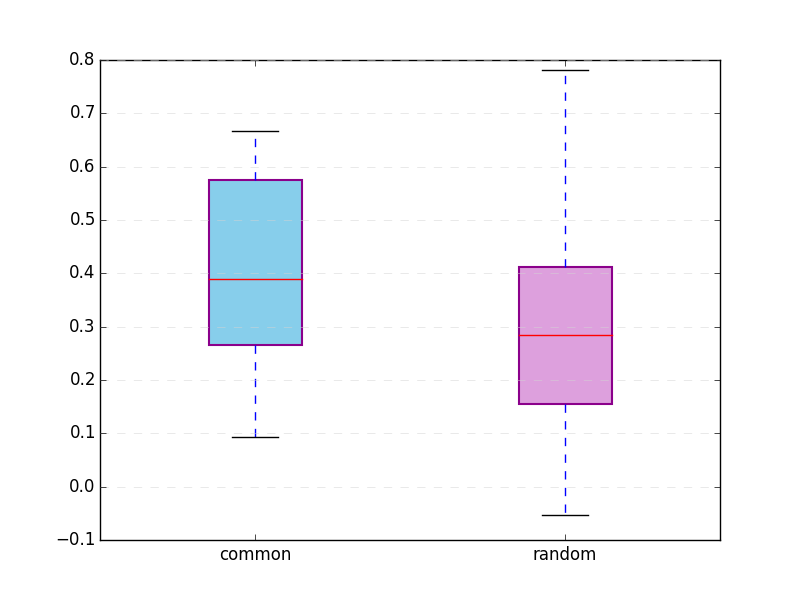}}\\
\caption{Comparing conserved TADs with random regions on chr21; pairwise correlations between all cell types for a) H3k27ac and b) H3k4me1.}
\label{fig_boxplot_common_random}
\end{figure}

Having analyzed TADs with consistent overlaps across all cell types, we now consider TADs which are specific to individual cell types. A TAD is considered specific to that cell type $i$ if i) $S\in \mathcal{I}_i$; ii) $\max_{T\in\mathcal{I}_j} J(S, T) <0.4$ for all $j\neq i$. This criterion leads to 28 TADs, each specific to one of the cell types. The median length of these TADs is 210kb, smaller than that of the conserved TADs. As an illustration, Figure \ref{fig_sig_track_individual} shows the histone modification tracks inside two TADs specific to K562 and GM12878 respectively. In these two regions, the histone modifications show a higher level of activity for the two specific cell types. To evaluate whether this is a systematic trend, we next calculated the total signal level for each of the 28 TADs under different types of histone modifications. For each type, we counted the number of TADs which have the highest total signal level in the cell type they are associated with. Comparing to a null distribution under which the cell type with the highest total signal levels is selected randomly, we computed the p-values using a binomial distribution in Table \ref{tab_pval}. This suggests the cell-type specific TADs tend to be regions with more active histone modifications. 

\begin{figure}[h!]
\centering
\subfloat[Histone modification H3k9ac]{\includegraphics[width = 5.5in]{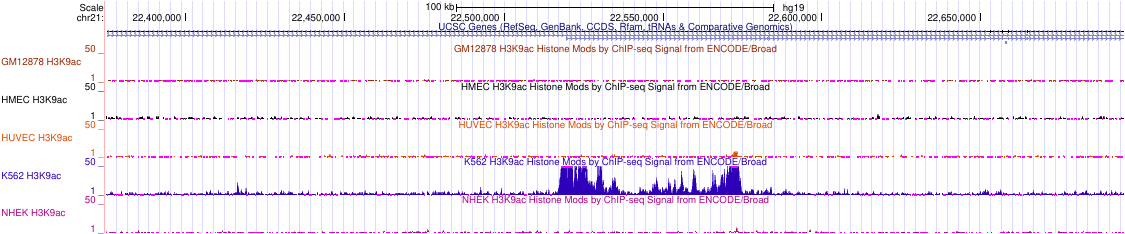}}\\
\subfloat[Histone modification H3k4me1]{\includegraphics[width = 5.5in]{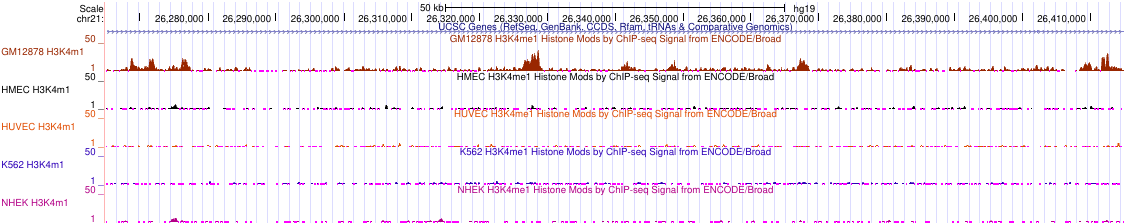}}\\
\caption{(a) Signal tracks for H3k9ac within chr21:22375000-22695000, a TAD identified as specific to K562; (b) Signal tracks for H3k4me1 within chr21:26265000-26415000, a TAD identified as specific to GM12878. }
\label{fig_sig_track_individual}
\end{figure}

\begin{table}[h!]
\centering 
\begin{tabular}{p{4cm} c c c c} 
& H3k9ac & H3k27ac & H3k4me1 & H3k4me3  \\ [0.5ex] 
\hline 
\# TADs with the highest total signal level & 13 & 16 & 18 & 10 \\
p-value & $1.5\times10^{-3}$ & $1.7\times10^{-5}$ & $4.2\times10^{-7}$ & $3.9\times10^{-2}$ \\
\hline
\end{tabular}
\caption{For each type of histone modification, the number of TADs (out of 28) such that they have the highest total signal level in the cell type they are associated with.}
\label{tab_pval}
\end{table}

\subsubsection*{Without CTCF information}

As a final remark, we tested whether \our could reproduce consistent results without CTCF information. Applying the algorithm to a 5mb segment of chromosome 21 (chr21:26000000-31000000) using GM12878 data, 15 TADs were identified (vs. 13 TADs with CTCF covariate) at the same p-value cutoffs. 11 pairs of these have a Jaccard index greater than 0.7, suggesting the results are reasonably stable. However, we also note the computational time in this case is significantly longer, as the search space for optimization is considerably larger without incorporating the CTCF sites.

\section{Discussion}
The 3D structure of chromatin provides key information for understanding the regulatory mechanisms. Recently, technologies such as Hi-C have revealed the existence of an important type of chromatin structure known as TADs, which are regions with enriched contact frequency and have been shown to act as functional units with coordinated regulatory actions inside. In this paper, we propose a statistical network model to identify TADs treating genome segments as nodes and their interactions in 3D as edges. Unlike many traditional networks with exchangeable distributions, our model incorporates the linear ordering of the nodes and guarantees the communities found represent contiguous regions on the genome. Our method also achieves two important biological goals: i) Considering the empirical observation that TADs boundaries tend to correlate with CTCF binding sites, our method offers the flexibility to include CTCF binding data (or other ChIP-seq data) as biological covariates. ii) The likelihood-based approach allows for joint inference across multiple cell types. On the theoretical side, we have shown asymptotic convergence of the estimation procedure with appropriate initializations. In practice, we observe the algorithm always converges in a few iterations. Due to the linear nature of the algorithm, our method is computationally efficient; it takes less than 10 minutes to complete the computation on chr21 with CTCF information on a laptop, whereas methods like TADtree \citep{weinreb2016} can take up to hours. 

Some areas for future work include extending the theoretical analysis to increasing $K$, and considering modelling higher order interactions between TADs. Our current way of finding conserved and cell-type specific TADs involves computing overlaps between domains and choosing heuristic cutoffs. While we have shown using epigenetic features that the conserved and cell-type specific TADs found have desirable features, it would be more ideal to statistically model the extent of overlaps between different types of TADs. 

\section*{Acknowledgements}

The authors would like to thank Nathan Boley for helpful initial discussions. This research was funded in part by the ARC DECRA Fellowship to Y.X.R. Wang, the HHMI International Student Research Fellowship and Stanford Gabilan Fellowship to O. Ursu, NSF Grant DMS 17130833 and NIH Grant 1U01HG007031-01 to P. Bickel.  

\begin{supplement}
\sname{Supplement}\label{supp}
\stitle{Supplementary Information for ``Network modelling of topological domains using Hi-C data''} 
\slink[doi]{COMPLETED BY THE TYPESETTER}
\sdescription{The supplementary material includes code for TAD calling and the TAD coordinates called on chr1 and chr21 as the txt files. Additional simulations and real data results can be found in the supplementary file \cite{wang2019supp}.}
\end{supplement}

\appendix
\section{Proofs}
\label{sec:proofs}
Each $X\in\sX$ partitions the nodes into $K+1$ classes, thus we define the corresponding node labels as $Z=(Z_1, \dots, Z_n)$, with $Z_i = k$ if $Z_i\in[s_k,t_k]$, $Z_i = K^*+1$ if $Z_i$ does not fall inside any TAD. The set of feasible $Z$ is a subset of $\{1, \dots, K^*+1\}^n$ and can be seen as the latent node labels in a block model. Let $\mathcal{X}$ and $\mathcal{Z}$ be the feasible sets for $X$ and $Z$ respectively. $X^*$ and $Z^*$ are the true latent positions and the corresponding node labels. Following block model notations, define a $(K^*+1)\times(K^*+1)$ matrix $H^*$, where $H_{k,k} = \alpha^*_{s_k, t_k}$ for $1\leq k \leq K^*$, and $H^*_{k,l} = \beta^*$ otherwise. For any label $Z$, let $R(Z, Z^*)$ be the confusion matrix with \[
R_{k,l}(Z, Z^*) = \frac{1}{n} \sum_{i=1}^n \mathbb{I}(Z_i = k, Z^* = l).
\]    
Finally set $E=n\times n \text{ matrix of } 1$.

With appropriate concentration, it suffices to consider $l(A;\pi,\beta)$ at expectation $\E(A)$. Define 
\begin{align}
G(R, \beta) = \sum_{k=1}^{K^*} (RER^T)_{k,k}KL\left( \frac{(RH^*R^T)_{k,k}}{(RER^T)_{k,k}} \Vert \beta\right)
\end{align}
for some $Z\in\mathcal{Z}$ and its corresponding $R$. For simplicity of notation, we assume $A$ has diagonal entries generated in the same way as non-diagonal entries, which does not affect the asymptotic results. We have the following lemma for the maximum of $G(\cdot, \beta)$.

\begin{lemma} Suppose $\beta$ satisfies Assumptions \ref{assump_1} and  \ref{assump_2}. Then for all $K\ge K^*+G^*$ and $n$ large enough,  
 \[
\max_{Z\in\mathcal{Z}} G(R(Z,Z^*), \beta) = \frac{1}{n^2}\sum_{k=1}^{K^*} (n^*_{k})^2 KL(\alpha^*_k \Vert \beta) + \frac{1}{n^2} \sum_{i=0}^{K^*} (s_{i+1} - t_i-1)^2 KL(\beta^*\Vert\beta). 
\] The maximum is unique at $R_0$ such that $X_{s_k, t_k} =1$ for all $1\le k \le K^*$ and $X_{t_{i}, s_{i+1}} = 1$ for all $0\leq i \leq K^*$. Furthermore, for any $R_1\neq R_0$ and $n$ large enough,
\begin{equation}
\left.\frac{\partial G((1-\epsilon)R_0+\epsilon R_1, \beta) }{\partial\epsilon} \right\vert_{\epsilon=0^+}\leq -C <0
\label{eq_deriv}
\end{equation}
for some $C>0$. 
\label{lem_beta}
\end{lemma}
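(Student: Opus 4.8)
The plan is to work entirely at the population level, since the sentence preceding the lemma already reduces the analysis of $l(A;\pi,\beta)$ to $\E(A)$ by concentration. Under $\E(A)$ each feasible $Z$ induces an interval partition of $\{1,\dots,n\}$ into fitted blocks, and a direct computation (mirroring the derivation of $\hat\alpha_{ab}$) shows that a block $B$ of size $|B|$ contributes, up to the normalization in $G$, a reward $\tfrac{|B|^2}{n^2}KL(\bar\mu_B\Vert\beta)$, where $\bar\mu_B=|B|^{-2}\sum_{i,j\in B}H^*_{Z^*_i,Z^*_j}$ is the average true connectivity inside $B$; summing over blocks recovers $G(R(Z,Z^*),\beta)$. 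Since a block lying inside one true region and a block crossing regions both have their cross-pairs valued at $\beta^*$, one gets $\bar\mu_B=|B|^{-2}\big(\sum_k m_{B,k}^2\alpha^*_k+(|B|^2-\sum_k m_{B,k}^2)\beta^*\big)$, with $m_{B,k}$ the number of nodes of $B$ in TAD $k$. The proof is then a combinatorial optimization of this reward over interval partitions, which I would organize as: (i) reduce to partitions whose block boundaries sit at true TAD boundaries, (ii) decide region by region whether to keep or merge, and (iii) convert the resulting strict optimality into the derivative bound \eqref{eq_deriv}.

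For step (i) I would use two convexity facts. Within any homogeneous true region the connectivity is constant, so a sub-block's reward scales as its size squared; since $(\sum_i s_i)^2\ge\sum_i s_i^2$, covering a region by a single block dominates any finer cut, letting me push every interior boundary out to a true boundary and assume each true region lies in a single block. For a block straddling boundaries, convexity of $s\mapsto KL(s\Vert\beta)$ applied to the representation of $\bar\mu_B$ above gives $\tfrac{|B|^2}{n^2}KL(\bar\mu_B\Vert\beta)\le\tfrac{1}{n^2}\big(\sum_k m_{B,k}^2 KL(\alpha^*_k\Vert\beta)+(|B|^2-\sum_k m_{B,k}^2)KL(\beta^*\Vert\beta)\big)$, and together with $\sum_{\text{blocks}}m_{B,k}^2\le(n^*_k)^2$ this bounds the TAD part by $\tfrac1{n^2}\sum_k(n^*_k)^2KL(\alpha^*_k\Vert\beta)$, with equality exactly when each TAD is engulfed by one block. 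This already matches the first sum in the claimed maximum and pins down the TAD alignment of any maximizer.

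The crux is the background part, and this is where convexity alone fails: the cross-pairs of a block that swallows several TADs are all charged at $\beta^*$, so the convex surrogate inflates the background coefficient from $\sum_i(s_{i+1}-t_i-1)^2$ up to $\sum_{\text{blocks}}(|B|^2-\sum_k m_{B,k}^2)$ and can overshoot the target. I would therefore drop the convex surrogate for any block containing at least one whole TAD and instead compare its \emph{exact} reward $\tfrac{|B|^2}{n^2}KL(\bar\mu_B\Vert\beta)$, with $|B|=s_j-t_i-1$, against the reward of splitting it into its constituent pure TAD- and gap-blocks. Assumption~\ref{assump_2} is precisely the inequality making the split strictly larger for every span $j>i+1$, while Assumption~\ref{assump_1} (so that $KL(\alpha^*_k\Vert\beta)>0$ and $KL(\beta^*\Vert\beta)>0$) guarantees that every TAD- and gap-block carries strictly positive reward and is worth keeping. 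Since the fully refined configuration $R_0$ uses exactly $K^*+G^*\le K$ blocks it is feasible, and combining the steps shows it is the unique maximizer with value $\tfrac1{n^2}\sum_k(n^*_k)^2KL(\alpha^*_k\Vert\beta)+\tfrac1{n^2}\sum_i(s_{i+1}-t_i-1)^2KL(\beta^*\Vert\beta)$; uniqueness follows from the strictness in Assumption~\ref{assump_2} and in the two convexity inequalities, equality in which forces pure, whole-region blocks.

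Finally, for \eqref{eq_deriv} I would note that $G(\cdot,\beta)$ is smooth in $R$ wherever every fitted block is nonempty with average in $(0,1)$, since $(RER^T)_{kk}$ and $(RH^*R^T)_{kk}$ are quadratic in $R$ and $u\,KL(v/u\Vert\beta)$ is smooth for $u>0$. I would differentiate along $(1-\epsilon)R_0+\epsilon R_1$ by the chain rule, evaluate at $\epsilon=0^+$, and use that $R_0$ is the strict, isolated maximizer from above: the right derivative is $\le 0$, and the strict separation furnished by Assumptions~\ref{assump_1}--\ref{assump_2} rules out a vanishing first-order term, making it strictly negative. As $\mathcal Z$ is finite there are finitely many extreme directions $R_1$, so $C=\min_{R_1}\big(-\partial_\epsilon G|_{0^+}\big)>0$ gives the uniform bound. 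The main obstacle is exactly the background comparison of the third paragraph: the convex relaxation is demonstrably too lossy for TAD-engulfing merges, so one must control the exact KL of a mixed block and lean on Assumption~\ref{assump_2}; a secondary technical point is verifying that the first-order derivative in \eqref{eq_deriv} does not degenerate to zero at the vertex $R_0$.
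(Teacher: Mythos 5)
Your overall architecture is the same as the paper's: optimize $G$ at the population level over interval partitions, bound each fitted block's contribution through its average connectivity $\bar\mu_B$, control the TAD part by convexity of $KL(\cdot\Vert\beta)$ together with $\sum_B m_{B,k}^2\le (n^*_k)^2$, and invoke Assumption \ref{assump_2} to rule out blocks that merge several true regions, with uniqueness at $R_0$. You also correctly identify the crux that the paper's case analysis is built to handle: the plain convexity ("sum") surrogate inflates the background coefficient from $\sum_i(s_{i+1}-t_i-1)^2$ to $\sum_B\bigl(|B|^2-\sum_k m_{B,k}^2\bigr)$ and cannot by itself give the stated maximum.

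However, your resolution of that crux has a genuine gap. You propose to compare the \emph{exact} reward $\tfrac{|B|^2}{n^2}KL(\bar\mu_B\Vert\beta)$ of a TAD-swallowing block against the split configuration and assert that Assumption \ref{assump_2} ``is precisely the inequality making the split strictly larger.'' It is not: Assumption \ref{assump_2} compares the surrogate background term $\bigl((s_j-t_i-1)^2-\sum_{i<k<j}(n^*_k)^2\bigr)KL(\beta^*\Vert\beta)$ with the TAD term $\sum_{i<k<j}(n^*_k)^2KL(\alpha^*_k\Vert\beta)$; it says nothing about $KL(\bar\mu_B\Vert\beta)$, which is a KL divergence of a \emph{mixture}. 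The missing link is the one-sided bound that is the content of the paper's inequality \eqref{eq_row_bound}: under Assumption \ref{assump_1}, because $\beta^*<\beta<\min_k\alpha^*_k$ puts the background and the TADs on opposite sides of $\beta$, the two parts of the mixture partially cancel and one gets a \emph{max}, not a sum,
\[
|B|^2\,KL(\bar\mu_B\Vert\beta)\;\le\;\max\Bigl\{\bigl(|B|^2-\textstyle\sum_k m_{B,k}^2\bigr)KL(\beta^*\Vert\beta),\;\textstyle\sum_k m_{B,k}^2\,KL(\alpha^*_k\Vert\beta)\Bigr\}.
\]
(Sketch: if $\bar\mu_B\ge\beta$, write $\bar\mu_B=(1-t)\beta+t\bar\alpha$ with $t\le\lambda:=\sum_k m_{B,k}^2/|B|^2$ and use convexity with $KL(\beta\Vert\beta)=0$, then convexity again to split $KL(\bar\alpha\Vert\beta)$ over the $\alpha^*_k$; argue symmetrically if $\bar\mu_B<\beta$.) Only with this bound does Assumption \ref{assump_2} yield merged $\le$ TAD part $<$ split. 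So Assumption \ref{assump_1} does real work here---it is what makes the cancellation possible---whereas you invoke it only for positivity of the KL terms; without the max bound your chain of inequalities does not close (and if $\beta$ were not sandwiched between $\beta^*$ and the $\alpha^*_k$, the conclusion can fail outright). The remainder of your outline---the $(\sum_i s_i)^2\ge\sum_i s_i^2$ merging step, the uniqueness argument, and the derivative bound \eqref{eq_deriv} via strict maximality over finitely many directions (which the paper itself only gestures at, stating it ``can be checked with differentiation'')---is consistent with the paper's proof.
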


\begin{proof}
For each feasible $Z$, let $\{[l_1, m_1], \dots, [l_{K}, m_{K}]\}$ be the corresponding TAD positions defined by $Z$. For each row of $R(Z)$, 
\begin{align}
& (RER^T)_{k,k}KL\left( \frac{(RH^*R^T)_{k,k}}{(RER^T)_{k,k}} \Vert \beta\right)	\notag\\
\leq & \max\left\{ \left(\frac{(m_k-l_k)^2}{n^2} -  \sum_{i=1}^{K^*} R^2_{ki}  \right) KL(\beta^*\Vert\beta),  \sum_{i=1}^{K^*} R^2_{ki} KL(\alpha^*_k \Vert \beta)\right\}
\label{eq_row_bound}
\end{align}
by Assumption \ref{assump_1} and the convexity of $K(\cdot \Vert \beta)$. Also for the $k$th row of $R$, define \begin{align}
i_k = \min \{i : [s_i, t_i]\cap[l_k, m_k]\neq\emptyset \}, \quad j_k = \max \{i : [s_i, t_i]\cap[l_k, m_k]\neq\emptyset \},. 
\label{eq_row_ind}
\end{align}
We first consider the case where the set above is nonempty. For two adjacent rows $k$ and $k+1$, it suffices to consider the case $j_k = i_{k+1}$. Denote $S_{k, k+1} = \sum_{l=k}^{k+1} (RER^T)_{l,l}KL\left( \frac{(RH^*R^T)_{l,l}}{(RER^T)_{l,l}} \Vert \beta\right)$.  By \eqref{eq_row_bound}, $S_{k,k+1}$ is upper bounded by one of the following:

\noindent 1. $\left( (m_{k+1} - l_k)^2/n^2 - \sum_{q=k}^{k+1}\sum_{l=i_q}^{j_q} R^2_{ql}  \right) KL(\beta^*\Vert\beta)$.

\noindent 2.
\begin{align*}
\sum_{l=i_k}^{j_k-1} R^2_{kl} KL(\alpha^*_l \Vert \beta) + R^2_{k,j_k} KL(\alpha^*_{j_k} \Vert \beta) \\
+ \left( (m_{k+1} - l_{k+1})^2/n^2 - \sum_{l=i_{k+1}}^{j_{k+1}} R^2_{k+1,l}\right) KL(\beta^*\Vert\beta),
\end{align*} 
which is itself upper bounded by
\begin{align*}
 & \sum_{l=i_k}^{j_k-1} R^2_{kl} KL(\alpha^*_l \Vert \beta) + \max \left\{ \left( \frac{(m_{k+1} - s_{j_k})^2 - (n^*_{j_k})^2}{n^2} - \sum_{l=i_{k+1}+1}^{j_{k+1}} R^2_{k+1,l} \right) KL(\beta^*\Vert \beta),  \right.	\\
& \qquad \left. \left(\frac{n^*_{j_k}}{n}\right)^2 KL(\alpha^*_{j_k} \Vert \beta) + \left( \frac{(m_{k+1}-t_{j_k})^2}{n^2} - \sum_{l=i_{k+1}+1}^{j_{k+1}}  R^2_{k+1, l}\right) KL(\beta^*\Vert \beta)\right\}.
\end{align*}

\noindent 3.
\begin{align*}
\left( (m_{k} - l_{k})^2/n^2 - \sum_{l=i_{k}}^{j_{k}} R^2_{k,l}\right) KL(\beta^*\Vert\beta) \\
+ R^2_{k+1,i_{k+1}} KL(\alpha^*_{i_{k+1}} \Vert \beta) +\sum_{l=i_{k+1}+1}^{j_{k+1}} R^2_{k+1,l} KL(\alpha^*_l \Vert \beta). 
\end{align*} Similar to the case above, this is bounded by
\begin{align*}
& \sum_{l=i_{k+1}+1}^{j_{k+1}} R^2_{k+1, l} KL(\alpha^*_l \Vert \beta) + \max \left\{ \left( \frac{(t_{j_{k}} - l_k)^2 - (n^*_{j_k})^2}{n^2} -  \sum_{l=i_k}^{j_k-1} R^2_{kl} \right) KL(\beta^*\Vert \beta),  \right.	\\
& \qquad \left. \left(\frac{n^*_{j_k}}{n}\right)^2 KL(\alpha^*_{j_k} \Vert \beta) + \left(\frac{(s_{j_k}-l_k)^2}{n^2} - \sum_{l=i_k}^{j_k-1} R^2_{k, l}\right) KL(\beta^*\Vert \beta)\right\}.
\end{align*}

\noindent 4.
$\sum_{q=k}^{k+1}\sum_{l=i_q}^{j_q}R^2_{ql} KL(\alpha^*_{l} \Vert\beta)$.

If the set in \eqref{eq_row_ind} is empty, 
\begin{align*}
(RER^T)_{k,k}KL\left( \frac{(RH^*R^T)_{k,k}}{(RER^T)_{k,k}} \Vert \beta\right) & = \left(\frac{m_k-l_k}{n}\right)^2 KL(\beta^* \Vert \beta)	\\
& \leq (s_{l+1}-t_l)^2 KL(\beta^*\Vert \beta)
\end{align*}
for some $1\leq l \leq K^*$.

The above cases show for any $Z\in\mathcal{Z}$, an upper bound for $G(R(Z, Z^*), \beta)$ is of the form
\begin{align}
\sum_{k=1}^{L} \left( \frac{(s_{j_k}-t_{i_k})^2 - \sum_{i_k<l <j_k} (n^*_l)^2}{n^2} \cdot KL(\beta^*\Vert\beta)\right) + \sum_{l\in \mathcal{I}} \frac{(n^*_l)^2}{n^2} KL(\alpha^*_l\Vert \beta), 
\end{align}
where $\mathcal{I}$ is an index set such that $\mathcal{I}\cap_{k=1}^{L} [i_k, j_k] =\emptyset$.
By Assumption \ref{assump_1}, this is bounded by \[
\frac{1}{n^2}\sum_{k=1}^{K^*} (n^*_{k})^2 KL(\alpha^*_k \Vert \beta) + \frac{1}{n^2} \sum_{i=0}^{K^*} (s_{i+1} - t_i-1)^2 KL(\beta^*\Vert\beta)
\]
with equality achieved only at $R_0$ for any $K\ge K^*+G^*$. The second part of the lemma can be checked with differentiation. 
\end{proof}

Let $[l_k, m_k]$ be the $k$-th domain in a configuration $Z$ corresponding to the $k$-th row in $R$. Next we state a concentration lemma for the averages $\hat{\alpha}_{l_k, m_k}(Z)$. Denote $O_{l_k, m_k}(Z) = (m_k-l_k)^2\hat{\alpha}_{l_k, m_k}(Z) $ and $\Delta_k(Z) = O_{l_k, m_k}(Z) / n^2 - (RH^*R^T(Z))_{k,k}$.

\begin{lemma}
For $\epsilon\leq 3$,
\begin{equation}
\P\left( \max_{Z\in\mathcal{Z}} \max_{1\le k\le K} \left\vert \Delta_k(Z) \right\vert \geq \epsilon\right) \leq 2(K)^{n+1} \exp\left( -C_1(H^*)\epsilon^2 n^2 \right).
\label{eq_bound1}
\end{equation}
Let $Z_0\in \mathcal{Z}$ be a fixed set of labels, then for $\epsilon\leq 3m/n$,  
\begin{align}
& \P\left( \max_{Z: |Z-Z_0| \leq m} \max_{1\le k\le K} \left\vert \Delta_k(Z) -  \Delta_k(Z_0) \right\vert > \epsilon \right)		\notag\\
\leq & 2\binom{n}{m} (K)^{m+1}\exp\left(- C_2(H^*)\frac{n^3\epsilon^2}{m} \right).
\label{eq_bound2}
\end{align}
$C_1(H^*)$ and $C_2(H^*)$ are constants depending only on $H^*$.
\label{lem_concentration}
\end{lemma}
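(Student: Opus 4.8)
The plan is to recognize each $\Delta_k(Z)$ as a centered, rescaled sum of independent Bernoulli variables and then combine a single-configuration Bernstein bound with a union bound over the finite label space. First I would record the probabilistic structure: the upper-triangular entries $\{A_{ij}\}_{i<j}$ are independent with $\E A_{ij} = H^*_{Z^*_i, Z^*_j}$, and, writing $O_{l_k,m_k}(Z) = \sum_{i,j\in[l_k,m_k]} A_{ij}$, a direct computation gives $\E\,O_{l_k,m_k}(Z)/n^2 = (RH^*R^T)_{k,k}$. Hence $\Delta_k(Z) = n^{-2}\bigl(O_{l_k,m_k}(Z) - \E\,O_{l_k,m_k}(Z)\bigr)$ is an $n^{-2}$-rescaled sum of at most $n^2$ independent, centered, $[-1,1]$-bounded terms, with second-moment sum $\sum_i \E X_i^2 \le v(H^*)\,n^2$, where $v(H^*) := \max_{k,l} H^*_{k,l}(1-H^*_{k,l})$ is the variance proxy carried by the constants.

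For a fixed pair $(Z,k)$ I would apply Bernstein's inequality to $O_{l_k,m_k}(Z) - \E\,O_{l_k,m_k}(Z)$ with deviation $t = \epsilon n^2$, bound $M=1$, and variance sum $\le v(H^*)n^2$. The hypothesis $\epsilon\le 3$ forces the linear Bernstein term $Mt/3 = \epsilon n^2/3 \le n^2$ to stay comparable to the variance scale $n^2$, so the denominator remains of order $n^2$ and the exponent is quadratic, giving $\P(|\Delta_k(Z)|\ge\epsilon) \le 2\exp(-C_1(H^*)\epsilon^2 n^2)$ with $C_1$ controlled by $v(H^*)$. A union bound over the feasible labelings $Z\in\mathcal{Z}\subseteq\{1,\dots,K\}^n$ (at most $K^n$ of them, counted crudely) and over the at most $K$ rows $k$ contributes the prefactor $K^n\cdot K = K^{n+1}$, which establishes \eqref{eq_bound1}.

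For the local bound \eqref{eq_bound2} the key observation is a Lipschitz-in-relabeling estimate: if $Z$ differs from $Z_0$ in at most $m$ coordinates, then $O_{l_k,m_k}(Z) - O_{l_k,m_k}(Z_0)$ involves only entries $A_{ij}$ incident to one of those $\le m$ relabeled nodes, hence at most $\sim mn$ independent terms. Consequently $\Delta_k(Z)-\Delta_k(Z_0)$, after centering, is an $n^{-2}$-rescaled sum of $O(mn)$ bounded terms with second-moment sum $\le v(H^*)\,mn$. Bernstein's inequality with $t=\epsilon n^2$, under the hypothesis $\epsilon\le 3m/n$ (which makes $Mt/3 = \epsilon n^2/3 \le mn$ comparable to the variance scale $mn$), keeps the denominator of order $mn$ and yields $\P(|\Delta_k(Z)-\Delta_k(Z_0)|>\epsilon) \le 2\exp(-C_2(H^*)\,n^3\epsilon^2/m)$. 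Taking a union bound over the Hamming ball $\{Z:|Z-Z_0|\le m\}$, whose size is at most $\binom{n}{m}K^m$, and over the $\le K$ rows produces the factor $\binom{n}{m}K^{m+1}$, giving \eqref{eq_bound2}.

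I expect the main obstacle to be the relabeling step for \eqref{eq_bound2}: one must argue carefully that perturbing at most $m$ node labels disturbs only $O(mn)$ of the independent summands, so that the effective variance scales like $mn$ rather than $n^2$. This is precisely what converts the Bernstein exponent into the factor $n^3/m$ and what fixes the regime threshold at $\epsilon\le 3m/n$, marking the boundary where the linear (sub-exponential) term in the Bernstein denominator stops being dominated by the variance term. The remaining bookkeeping — symmetry factors of two, diagonal entries, and the crude enumeration of contiguous feasible configurations by $K^n$ — is routine and does not affect the stated exponents or the dependence of $C_1,C_2$ on $H^*$.
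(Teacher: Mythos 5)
Your proposal is correct and is essentially the paper's own approach: the paper proves this lemma simply by deferring to \citet{bickel2009} ``with minor modifications,'' and the argument in that reference is exactly the one you reconstruct --- a Bernstein bound for each fixed $(Z,k)$ (with the thresholds $\epsilon\le 3$ and $\epsilon\le 3m/n$ arising precisely so the linear Bernstein term is dominated by the variance term, of order $n^2$ and $mn$ respectively), followed by union bounds of size $K^{n+1}$ and $\binom{n}{m}K^{m+1}$. Your write-up, including the key observation that relabeling $m$ nodes perturbs only $O(mn)$ summands (which produces the $n^3\epsilon^2/m$ exponent), supplies the details the paper leaves implicit in its citation.
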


\begin{proof}
The proof follows from \citet{bickel2009} with minor modifications.
\end{proof}

\begin{proof}[Proof of Theorem \ref{thm_main}]
Suppose $\beta^{(0)}$ satisfies Assumptions \ref{assump_1} and \ref{assump_2}. We consider the most general setup where every position is a CTCF binding site. The likelihood objective is given by 
\begin{align}
l(A; Z, \beta^{(0)}) & = \frac{1}{2} \sum_{k=1}^{K} \sum_{\stackrel{i\neq j,}{i,j\in[l_k, m_k]}}  \left[A_{ij} \log \frac{\hat{\alpha}_{l_k, m_k}(Z)(1-\beta^{(0)})}{(1-\hat{\alpha}_{l_k, m_k}(Z))\beta^{(0)}}+\log\frac{1-\hat{\alpha}_{l_k, m_k}(Z)}{1-\beta^{(0)}}\right] 	\notag\\
& =  \frac{1}{2} \sum_{k=1}^{K} (m_k-l_k)^2 KL(\hat{\alpha}_{l_k, m_k}(Z) \Vert \beta^{(0)}).
\end{align}
Let $R_0$ (and the corresponding $X_0$, $Z_0$) be the optimal configuration in Lemma \ref{lem_beta}.

We first consider $X$ far away from $X_0$. 
Define
\[
I_{\delta_n} = \{ X\in \mathcal{X}: G(R(X), \beta^{(0)}) - G(R_0, \beta^{(0)}) < -\delta_n \},
\]
where $\delta_n$ is a sequence converging to 0 slowly. 
First by \eqref{eq_bound1} in Lemma \ref{lem_concentration}, 
\begin{align}
 & \left\vert l(A; X, \beta) - n^2G(R(X), \beta) \right\vert	\notag\\
\leq & C n^2 \sum_{k=1}^{K} \left\vert \frac{O_{l_k,m_k}(X)}{n^2} - (RH^*R^T(X))_{k,k} \right\vert 	\notag\\
= & o_P(n^{2-\gamma})
\end{align}
for some $\gamma<1/2$. It follows then
\begin{align*}
& l(A; X, \beta^{(0)}) - l(A; X_0, \beta^{(0)})	\\
\leq & o_P(n^{2-\gamma}) -n^2\delta_n,
\end{align*}
and 
\begin{align}
& \exp\left\{\max_{X\in I_{\delta_n}} l(A;X,\beta^{(0)}) - l(A;X_0, \beta^{(0)})\right\}	\\
\leq & \sum_{X\in I_{\delta_n}} \exp \left\{ l(A;X,\beta^{(0)}) - l(A;X_0, \beta^{(0)}) \right\}	\\
\leq & \exp(o_P(n^{2-\gamma}) -n^2\delta_n + n\log K) =o_P(1)
\end{align}
choosing $\delta_n \to 0$ slowly enough. 

Next consider the case $X\in I^c_{\delta_n}$ and $X\neq X_0$. 
By \eqref{eq_bound2} in Lemma \ref{lem_concentration},  
\begin{align}
 & \P\left( \max_{X\neq X_0} \Vert \Delta(Z) - \Delta(Z_0) \Vert_{\infty} > \epsilon \vert Z-Z_0\vert/n \right)		\notag\\
\leq & \sum_{m=1}^{n} \P\left( \max_{Z: |Z - Z_0| = m} \Vert \Delta_k(Z) - \Delta_k(Z_0) \Vert_{\infty}  >  \epsilon\frac{m}{n} \right)		\notag\\
\leq & \sum_{m=1}^{n} 2 n^m K^{m+1} \exp\left(- C mn  \right) 	\to0.
\label{eq_x}
\end{align}
It follows then if $|Z-Z_0|=m$, $\frac{\Vert \Delta(Z) - \Delta(Z_0) \Vert_{\infty}}{m/n} = o_p(1)$, and $\frac{1}{n^2}\Vert O(Z)-O(Z_0)\Vert_{\infty} \geq \frac{m}{n}(C+o_P(1))$ since $\Vert RH^*R^T(Z)-RH^*R^T(Z_0)\Vert_{\infty} \geq C\frac{m}{n}$. Note that in the set $I^c_{\delta_n}$, $|Z-Z_0| \to 0$. Then \eqref{eq_deriv} implies
\begin{align}
G(R(Z), \beta^{(0)}) - G(R_0(Z_0), \beta^{(0)}) < -C\frac{m}{n}
\end{align}
if $|Z-Z_0| = m$. Since $G(R, \beta^{(0)})$ is the population version of $\frac{1}{n^2}l(A;R, \beta^{(0)})$ and $O(Z)/n^2$ approaches $RH^*R^T(Z)$ uniformly in probability, by the continuity of the derivative, 
\begin{align}
\frac{1}{n^2} \left( l(A; R(Z_0), \beta^{(0)}) - l(A; R(Z), \beta^{(0)})\right) = \Omega_P(m/n)
\end{align}
for $|Z-Z_0|=m$. It follows then
\begin{align}
& \exp\left\{\max_{X\in I^c_{\delta_n}, X\neq X_0} l(A;X,\beta^{(0)}) - l(A;X_0, \beta^{(0)})\right\}	\notag\\
\leq & \sum_{X\in I^c_{\delta_n}, X\neq X_0} \exp\left\{ l(A;X,\beta^{(0)}) - l(A;X_0, \beta^{(0)}) \right\}	\notag\\
\leq & \sum_{m=1}^n \binom{n}{m} (K+1)^m e^{-\Omega_P(mn)} = o_P(1)
\end{align}
\end{proof}

\bibliographystyle{plainnat}
\bibliography{ref}

\includepdf[pages=-]{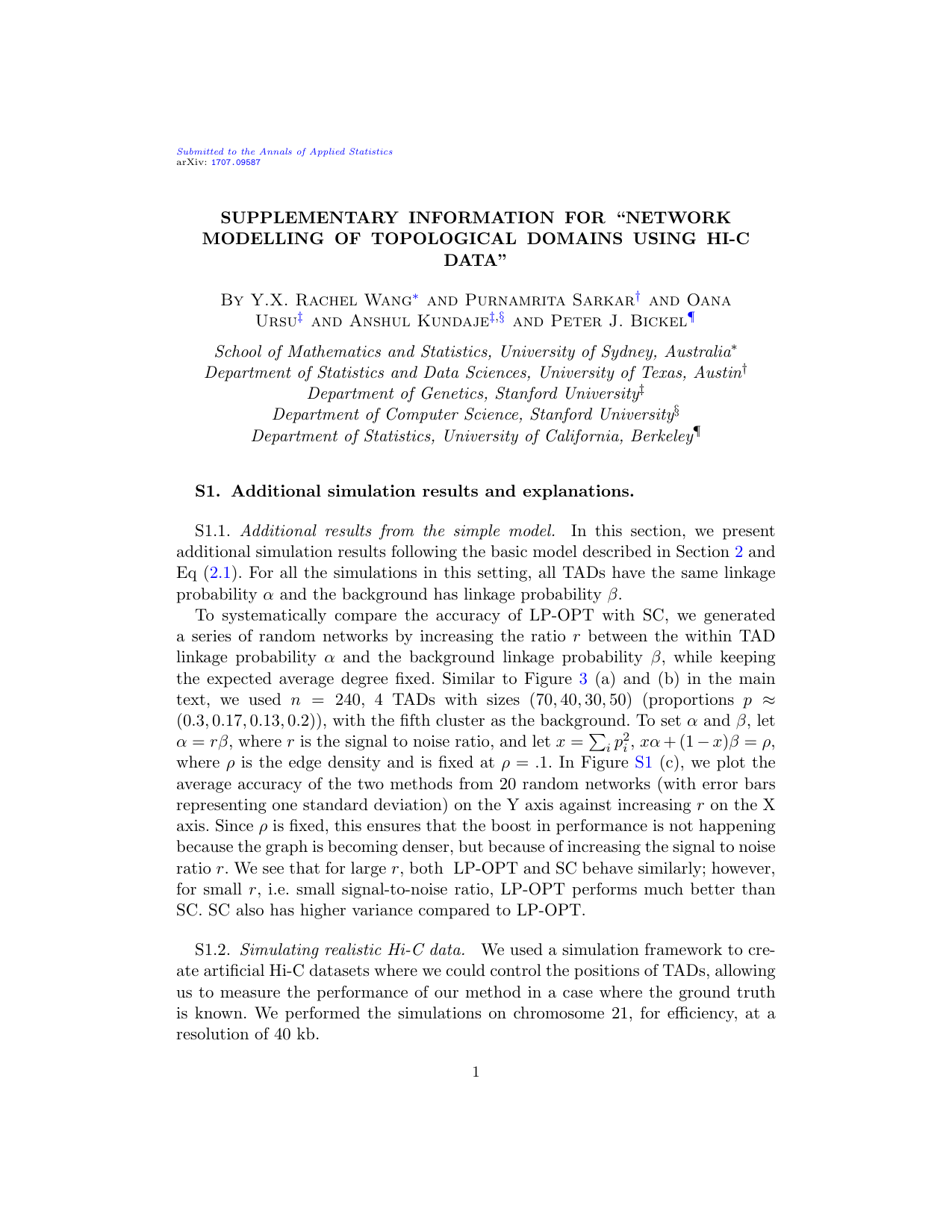}
\end{document}